\newcommand{\arr}[1]{\ensuremath{\protect\overrightarrow{#1}}}
\begin{document}

\title{Improved Bounds on the Stretch Factor of $Y_4$ \thanks{Supported by NSF grant CCF-0728909.}}
\author{Mirela Damian\and Naresh Nelavalli}
\authorrunning{M. Damian and N. Nelavalli}
\institute{Department of Computing Sciences, Villanova University, Villanova, USA \\ 
\email{mirela.damian@villanova.edu} and \email{nnelaval@villanova.edu} 
%\and
%Department of Computing Sciences, Villanova University, Villanova, USA. \email{nnelaval@villanova.edu}.
 }

%\date{}

\maketitle

\begin{abstract}
We establish an upper bound of $13+8\sqrt{2} \lesssim 4.931$ on the stretch factor of the 
Yao graph $Y^\infty_4$ defined in the $L_\infty$-metric, improving upon the best previously known upper bound 
of $6.31$. 
We also establish an upper bound of $(11+7\sqrt{2})\sqrt{4+2\sqrt{2}} \lesssim 54.62$ on the stretch factor of the 
Yao graph $Y_4$ defined in the $L_2$-metric, improving upon the best previously known upper bound of 
$662.16$. 
\end{abstract}

\section{Introduction}
Let $V$ be a finite set of points in the plane. 
The \emph{directed Yao graph}~\cite{Yao82} with integer parameter $k > 0$,
denoted $\arr{Y_k}$, is defined as follows. At each point $u \in V$, any $k$
equally-separated rays originating at $u$ define $k$ cones.
In each cone, pick a shortest edge $(u,v)$, if there is one, and add to $\arr{Y_k}$
the directed edge $\overrightarrow{(u,v)}$. Ties are broken arbitrarily.
The \emph{undirected Yao graph} $Y_k$ includes all edges of $\arr{Y_k}$ but ignores their directions.
Most of the time we ignore the direction of an edge $(u,v)$. We refer
to the directed version $\overrightarrow{(u,v)}$ of $(u,v)$ only when its
origin ($u$) is important and unclear from the context.
We will distinguish between $Y_k$, the Yao graph in the Euclidean $L_2$
metric, and $Y^\infty_k$, the Yao graph in the $L_\infty$ metric. Unlike
$Y_k$ however, in constructing $Y^\infty_k$ ties are broken by always
selecting the most counterclockwise edge. This tie breaking rule 
was first mentioned in~\cite{BDD+12}, 
where it was required in order to maintain the planarity of $Y_4^\infty$. 
Throughout the rest of the paper we will refer to the points in
$V$ as \emph{vertices}, to distinguish them from other points in the
plane.

For a given graph $G$ with vertex set $V$, we say that $H$ is
 a $t$-\emph{spanner} of $G$ if, for any pair of vertices
$u,v \in V$, a shortest path in $H$ from $u$ to $v$ is no longer than
$t$ times the length of a shortest path in $G$ between $u$ and $v$. 
A graph $H$ is a $t$-\emph{spanner} of $V$ if $H$ is a $t$-spanner 
of the complete graph on $V$. 
The value $t$ is called the \emph{stretch factor} of $H$. If $t$ is constant, then $H$ is called a
\emph{length spanner}, or simply a \emph{spanner}.

The spanning properties of Yao graphs have been extensively studied. \autoref{tab:yaoresults} summarizes 
some results that are relevant to this paper.
%It has been shown that 
%$Y_2$ and $Y_3$ are not spanners~\cite{Nawar09}, $Y_4$ is a spanner with stretch factor 
%$8 \sqrt{2}(26+23\sqrt{2}) \lesssim 662.16$~\cite{BDD+12}, $Y_5$ is a spanner with stretch factor 
%$2+\sqrt{3} \lesssim 3.74$~\cite{BBD+15}, $Y_6$ is a spanner with stretch factor 5.8~\cite{BBD+15}, 
%and for $k \ge 7$, $Y_k$ is a spanner with stretch factor $\frac{1+\sqrt{2-2\cos\theta}}{2\cos\theta-1}$, 
%where $\theta = 2\pi/k$~\cite{BDD+12arXiv}.  

\begin{table}
\begin{center}
\begin{tabular} {|c|c|c|}
\hline
Reference & Graph & {Stretch Factor}  \\
\hline
\cite{Nawar09} & $Y_2$, $Y_3$ & $\infty$ \\
\hline
\cite{BDD+12} & $Y_4$ & $~8 \sqrt{2}(26+23\sqrt{2}) \lesssim 662.16$~ \\
\hline
\cite{BBD+15} & $Y_5$ & $2+\sqrt{3} \lesssim 3.74$ \\
\hline
\cite{BBD+15} & $Y_6$  & 5.8 \\ 
\hline 
\cite{BDD+12arXiv} & ~$Y_k$, $k \ge 7$~ & ~$(1+\sqrt{2-2\cos\theta})/(2\cos\theta-1)$, \mbox{where } $\theta = 2\pi/k$~\\
\hline
\cite{BK+14} & $~Y_4^\infty$ & 6.31 \\
\hline
\bf{~[this paper]~} & $~Y_4^\infty$ & $\bf{13+8\sqrt{2} \lesssim 4.94}$\\
\hline
\bf{[this paper]} & $Y_4$ & $\bf{(11+7\sqrt{2})\sqrt{4+2\sqrt{2}} \lesssim 54.62}$\\
\hline
\end{tabular}
\end{center}
\caption{Upper bounds on the stretch factor of Yao graphs.}
\label{tab:yaoresults}
\end{table}

\paragraph{Our contributions.} We show that the stretch factor of $Y_4$ is at most 
$(11+7\sqrt{2})\sqrt{4+2\sqrt{2}} \lesssim 54.62$, which is a significant improvement upon the best 
previously known upper bound of $662.16$ from~\cite{BDD+12}.  
We also show that the stretch factor of $Y_4^\infty$ is at most $13+8\sqrt{2} \lesssim 4.931$, 
improving the $6.31$ bound from~\cite{BK+14}. 
The graph $Y_4^\infty$ is of particular interest due to its planarity property (as a subgraph of the $L_\infty$-Delaunay triangulation~\cite{BK+14}) and its applications in scheduling problems~\cite{LW80}.

\section{Definitions}
Let $V$ be a set of vertices in the plane. 
For each vertex $u \in V$, let $x_u$ and $y_u$ denote the $x$-coordinate and the $y$-coordinate of $u$, respectively. For every pair of vertices $u, v \in V$, the \emph{horizontal distance} between $u$ and $v$ is $d_x(u, v) = |x_u -x_v|$; the 
\emph{vertical distance} is $d_y(u, v) = |y_u - y_v|$; the Euclidean distance is $d_2(u, v) = \sqrt{d_x(u, v)^2 + d_y(u, v)^2}$; and 
the $L_\infty$-distance is $d_\infty(u, v) = \max{\{d_x(u, v), d_y(u, v)\}}$. For any plane graph $G$ with vertex set $V$, the \emph{weight} of an edge 
in $G$ is the Euclidean distance between its endpoints; the \emph{length} of a path in $G$ is the sum of the weights of its constituent edges; 
and the distance in $G$ between $u, v \in V$, denoted $d_G(u, v)$, is the length of a shortest path in $G$ between $u$ and $v$. 
We denote by $(u, v)$ the \emph{edge} or the \emph{line segment} connecting $u$ and $v$, and the distinction between 
the two will become clear from the context. 

%
%%%%%%%%%%%%%%%%%%%%%%%%%%%%%%%%%Figure Begin
\begin{figure}[htbp]
\centering
\includegraphics[width=0.3\linewidth]{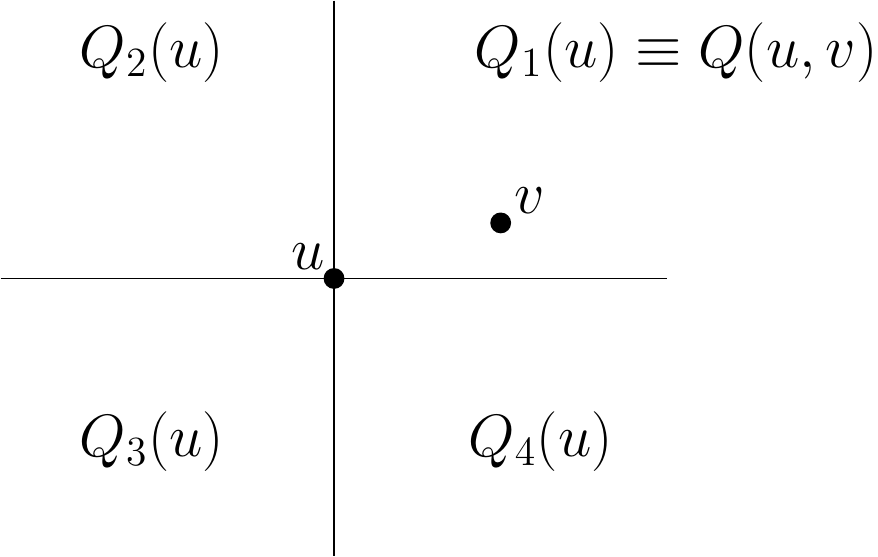}
\caption{(a) Definitions: quadrants $Q_i(u)$, $i = 1, 2, 3, 4$, and $Q(u, v)$.}
\label{fig:quadrants}
\end{figure}
%%%%%%%%%%%%%%%%%%%%%%%%%%%%%%%%%Figure End
%

A \emph{cone} is the region in the plane between two rays that radiate from the same point. With each vertex $u \in V$ we associate four cones of angles $\pi/2$ delimited by two lines parallel to the coordinate axes passing through $u$. We label the cones $Q_{1}(u), Q_{2}(u), Q_3(u)$ and $Q_4(u)$ in counterclockwise order, starting to the first quadrant. Refer to~\autoref{fig:quadrants}. 
%We assume that arithmetic on the cone labels is modulus 4, so that cones $Q_{i-1}$ and $Q_{i+1}$ are well defined. 
To avoid overlapping boundaries, we assume that each cone is half-open and half-closed, meaning that a cone includes its 
clockwise bounding ray but excludes its counterclockwise bounding ray. For any $u, v \in V$, let $Q(u, v)$ denote the quadrant with apex $u$ that contains $v$. 

The directed Yao graph $\arr{Y_4}$ with vertex set $V$ is constructed as follows. For each vertex $u \in V$ and each cone $Q_i(u)$, 
for $i = 1 \ldots 4$, extend a directed edge $\arr{(u,v)}$ from $u$ to a vertex $v \in V$ that lies in $Q_i(u)$ and minimizes the Euclidean distance $d_2(u,v)$. Ties are broken arbitrarily. 
The Yao graph $\arr{Y_4^\infty}$ is defined similarly to $Y_4$, with two differences: (i) it uses the $L_\infty$-distance $d_\infty(u, v)$ rather than the Euclidean distance $d_2(u, v)$, and (ii) ties are broken by selecting the most counterclockwise edge in each quadrant.
The undirected Yao graph $Y_4$ includes all edges of $\arr{Y_4}$ but ignores their directions, and similarly for $Y_4^\infty$. We are interested 
in the stretch factors of $Y_4$ and $Y_4^\infty$. 

Let $Del^\infty$ denote the Delaunay triangulation on $V$ in the $L_\infty$-metric, defined as follows. For any pair of vertices $u, v \in V$, an edge $(u,v)$ is in $Del^\infty$ if and only if there is an axis-aligned square with $u$ and $v$ on its boundary that contains no other vertices in its interior.  A well-known property of $Del^\infty$ is that, for each triangle $T$ in $Del^\infty$, the square whose sides pass through the three vertices of $T$ (the \emph{circumsquare} of $T$) has no vertices of $V$ in its interior. 

For any polygon $P$, let $\partial P$ denote the boundary of $P$. For any two vertices $u$ and $v$, let $R(u, v)$ denote the 
rectangle with sides parallel to the coordinate axes having $u$ and $v$ as opposite corners. (See~\autoref{fig:rect}.)
We say that two edges \emph{intersect} ({\emph{cross}) if they share a point (an interior point). Note that by this definition, two intersecting edges may share an endvertex. Throughout the paper, we use the symbol $\oplus$ to denote the concatenation operator. 

\section{$Y^\infty_4$ in the $L_\infty$ Metric}
\label{sec:y4inf}
In this section we show that $Y_4^\infty$ has stretch factor at most $\sqrt{13+8\sqrt{2}} \lesssim 4.931$. This
 improves upon the best previously known stretch factor of $(1+\sqrt{2})\sqrt{4+2\sqrt{2}} \lesssim 6.31$ from~\cite{BK+14}. 
We begin with the following result established in~\cite{BK+14}.
\begin{lemma}
The graph $Y_4^\infty$ is a subgraph of $Del^\infty$, a $(1+\sqrt{2})$-spanner of $Del^\infty$ and also a $(1+\sqrt{2})\sqrt{4+2\sqrt{2}}$-spanner of $V$. 
\label{lem:bk14}
\end{lemma}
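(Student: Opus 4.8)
The plan is to establish the three assertions in order, the second being the substantive one and the first and third being short.

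\textbf{Containment $Y_4^\infty\subseteq Del^\infty$.} Let $(u,v)\in Y_4^\infty$; assume without loss of generality that $v\in Q_1(u)$, so $v$ minimizes $d_\infty(u,\cdot)$ over $Q_1(u)$, and put $s:=d_\infty(u,v)$. Take the axis-aligned square $S$ of side $s$ having $u$ as its bottom-left corner. Since $\max(x_v-x_u,\,y_v-y_u)=s$, we have $v\in\partial S$, while any vertex in the interior of $S$ would lie strictly inside $Q_1(u)$ at $L_\infty$-distance strictly less than $s$ from $u$, contradicting the choice of $v$. Thus $S$ witnesses $(u,v)\in Del^\infty$. (Neither the half-open cone convention nor the counterclockwise tie-break is needed here.)

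\textbf{$(1+\sqrt2)$-spanner of $Del^\infty$.} It suffices to show that for every edge $(u,v)\in Del^\infty$ there is a $u$–$v$ path in $Y_4^\infty$ of Euclidean length at most $(1+\sqrt2)\,d_2(u,v)$; concatenating such paths along a shortest $Del^\infty$-path then gives $d_{Y_4^\infty}(u,v)\le(1+\sqrt2)\,d_{Del^\infty}(u,v)$ for all $u,v$. I would prove the per-edge bound by strong induction on the rank of $d_\infty(u,v)$ among pairs spanning a $Del^\infty$-edge. In the step, assume $v\in Q_1(u)$ and let $(u,a)$ be the $Y_4^\infty$-edge out of $u$ in cone $Q_1(u)$ — it exists because $v$ is itself a candidate, and $d_\infty(u,a)\le d_\infty(u,v)$. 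If $a=v$ we are done. Otherwise, using the empty Delaunay square of $(u,v)$ together with the half-open/counterclockwise conventions to locate $a$ relative to that square, one argues that $a$ and $v$ are joined in $Del^\infty$ by an edge (or a short path) all of whose defining $d_\infty$-values are \emph{strictly} smaller than $d_\infty(u,v)$, so the induction hypothesis applies; the proof then closes with a planar triangle-inequality estimate bounding $d_2(u,a)$ plus $(1+\sqrt2)$ times the length of the $a$–$v$ detour by $(1+\sqrt2)\,d_2(u,v)$.

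I expect this last step to be the crux. It requires a careful case analysis on where $a$ can sit — on the top side versus the right side of the empty square, at an interior point versus near a corner — and one must simultaneously ensure (i) the recursion is well-founded, i.e.\ the replacement edge or path strictly decreases the $d_\infty$-rank, and (ii) the Euclidean length accumulated over the recursion never exceeds the factor $1+\sqrt2$. The half-open cones and the counterclockwise tie-breaking rule are precisely what rule out the degenerate placements of $a$ that would otherwise break (i) or (ii).

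\textbf{$(1+\sqrt2)\sqrt{4+2\sqrt2}$-spanner of $V$.} I would obtain this by composing the previous item with the known (and tight) fact that the $L_\infty$-Delaunay triangulation is a $\sqrt{4+2\sqrt2}$-spanner of the complete Euclidean graph on $V$: $Del^\infty$ is congruent, after a $45^\circ$ rotation and a uniform scaling, to the $L_1$-Delaunay triangulation, for which this spanning ratio was established (improving Chew's earlier $\sqrt{10}$), and both rotation and scaling preserve stretch factors. Hence for all $u,v\in V$,
$d_{Y_4^\infty}(u,v)\le(1+\sqrt2)\,d_{Del^\infty}(u,v)\le(1+\sqrt2)\sqrt{4+2\sqrt2}\;d_2(u,v)$,
which is the claimed stretch factor.
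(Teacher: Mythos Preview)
The paper does not prove this lemma; it is quoted verbatim as a result established in~\cite{BK+14}. So there is no ``paper's own proof'' to compare against. That said, the paper immediately extracts from the \cite{BK+14} argument a structural fact (\autoref{lem:quad}) that reveals how the $(1+\sqrt2)$ bound is actually obtained there, and that route is quite different from yours.

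Your Parts~1 and~3 are fine. Your Part~2, however, is not a proof but a plan with an acknowledged hole: you write ``one argues that $a$ and $v$ are joined in $Del^\infty$ by an edge (or a short path) all of whose defining $d_\infty$-values are strictly smaller'' and then concede that this ``last step'' is the crux you have not carried out. There is no reason a~priori that $(a,v)$ is a $Del^\infty$ edge, and controlling both the well-foundedness of the recursion and the accumulated Euclidean length through an unspecified case split is exactly the content of the lemma. As written this is a gap, not a proof.

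The approach actually used in~\cite{BK+14}, as the paper makes explicit via \autoref{lem:quad}, avoids induction entirely: for every triangle $\triangle uvw\in Del^\infty$, at least two of its three edges already lie in $Y_4^\infty$, and if $(u,v)$ is the missing one then $u$ and $v$ sit on opposite sides of the circumsquare. The two-edge detour $u\!-\!w\!-\!v$ then has Euclidean length at most $(1+\sqrt2)$ times the side of the circumsquare, which equals $d_\infty(u,v)\le d_2(u,v)$. This gives the per-edge bound in one shot, with the ``careful case analysis'' absorbed into the single structural statement about which triangle edge can be absent. If you want a self-contained argument, proving \autoref{lem:quad} directly is both shorter and more robust than the greedy induction you sketch.
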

Although not explicitly stated, the proof of~\autoref{lem:bk14} from~\cite{BK+14} implies the following result.  

\begin{lemma}
 For each triangle $\triangle{uvw} \in Del^\infty$, at least two of its edges are 
in $Y_4^\infty$. If $(u,v)$ is not in $Y_4^\infty$, then $u$ and $v$ lie on opposite sides of the circumsquare of $\triangle{uvw}$. 
\label{lem:quad}
\end{lemma}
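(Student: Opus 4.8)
The plan is to revisit the argument behind Lemma~\ref{lem:bk14} (the result of~\cite{BK+14} that $Y_4^\infty \subseteq Del^\infty$) and extract the per-triangle information it implicitly proves. Let $\triangle uvw$ be a Delaunay triangle in $Del^\infty$, and let $S$ be its circumsquare, i.e. the axis-aligned square whose four sides each pass through one of $u,v,w$ (one side passes through two of them). By the circumsquare property quoted in the Definitions, $S$ contains no vertex of $V$ in its interior. The key observation is that each edge of $\triangle uvw$ joins two vertices on the boundary of $S$, and $S$ is an empty axis-aligned square; I will argue that such an edge is typically forced into $Y_4^\infty$, and pin down the one configuration where it may fail.

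First I would set up the cone/quadrant bookkeeping. Fix an edge, say $(u,v)$, and consider the quadrant $Q(u,v)$ of $u$ containing $v$; likewise $Q(v,u)$. Because $u$ and $v$ both lie on $\partial S$ and $S$ is empty, I want to show that within $Q(u,v)$ the vertex $v$ is a closest vertex to $u$ in the $L_\infty$-metric (or is selected by the counterclockwise tie-break), unless $u$ and $v$ sit on \emph{opposite} sides of $S$. The reason: if $v$ is not on the side of $S$ opposite to $u$, then the portion of $Q(u,v)$ that can contain a vertex $L_\infty$-closer to $u$ than $v$ is contained in the interior of $S$ (one checks this by a short case analysis on which two sides of $S$ pass through $u$ and $v$ — if they are adjacent sides or the same side, the relevant sub-square lies inside $S$), and $S$ is empty; so $v$ is the Yao-neighbor of $u$ in that cone, hence $(u,v) \in Y_4^\infty$. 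The same argument applies symmetrically from $v$'s side. Thus an edge of $\triangle uvw$ fails to be in $Y_4^\infty$ only if its two endpoints lie on opposite sides of $S$.

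Next I would rule out having two such "opposite-side" edges. A triangle has three vertices on $\partial S$, distributed among the four sides so that one side of $S$ carries two of the three vertices (that side being an edge of the triangle) and is therefore not of the opposite-side type, while the remaining vertex is alone on its side. At most one of the other two edges can connect a pair of vertices lying on opposite sides of $S$: if both did, the lone vertex would have to be simultaneously opposite to each of the two vertices on the shared side, which is impossible since those two vertices are on the same side. Hence at least two of the three edges have endpoints that are \emph{not} on opposite sides, and by the previous paragraph those two edges lie in $Y_4^\infty$. This gives both conclusions of the lemma: at least two edges of $\triangle uvw$ are in $Y_4^\infty$, and if $(u,v)\notin Y_4^\infty$ then $u$ and $v$ are on opposite sides of $S$.

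The main obstacle I anticipate is the case analysis in the first step — verifying that the region of $Q(u,v)$ strictly $L_\infty$-closer to $u$ than $v$ is swallowed by the empty circumsquare whenever $u,v$ are not on opposite sides — together with the need to make the counterclockwise tie-breaking rule work in favor of the edge rather than against it when $d_\infty(u,v)$ is tied by another boundary vertex of $S$. Degenerate placements (a vertex at a corner of $S$, or $u$ and $v$ on the same side) have to be folded into the argument carefully; I expect these to be exactly the situations the $\cite{BK+14}$ tie-breaking convention was designed to handle, so the cleanest route is to cite the structure of their proof and simply record the opposite-side dichotomy it establishes.
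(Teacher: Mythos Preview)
The paper gives no independent proof of this lemma: it simply records that the statement is implicit in the proof of Lemma~\ref{lem:bk14} from~\cite{BK+14}. Your closing recommendation --- to cite the structure of the~\cite{BK+14} argument and just extract the opposite-side dichotomy from it --- is therefore exactly what the paper does.

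Your direct reconstruction is broadly on target but has a concrete slip in the second step. You assert that ``one side of $S$ carries two of the three vertices''; that is not the generic configuration. Under the usual general-position hypothesis for $Del^\infty$ (no two input points share an $x$- or $y$-coordinate), the three vertices of a Delaunay triangle lie on three \emph{distinct} sides of the circumsquare, with the fourth side empty. The pigeonhole you actually want is: three vertices on three distinct sides of a square determine exactly one pair sitting on opposite sides and two pairs on adjacent sides, so at most one triangle edge can be of the opposite-side type. With that corrected count your two steps combine as intended. One smaller caveat on the first step: the region of $Q(u,v)$ at $L_\infty$-distance less than $d_\infty(u,v)$ from $u$ is \emph{not} always contained in $S$ when $u,v$ lie on adjacent sides --- it can spill past the side carrying $v$. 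What is true is that the corresponding region seen from at least one of the two endpoints is contained in $S$, and that suffices because $Y_4^\infty$ is undirected. Your sentence ``the same argument applies symmetrically from $v$'s side'' should therefore be read as an alternative (use whichever endpoint traps the region inside $S$), not as a claim that both directions always work.
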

An immediate consequence of~\autoref{lem:quad} is the following.

\begin{corollary}
\label{cor:virtual} For each triangle $\triangle{uvw} \in Del^\infty$, if $(u,v)$ is not in $Y_4^\infty$, then 
\[
d_\infty(u, v) \ge \max \{d_\infty(u, w), d_\infty(w, v)\}.
\]
\end{corollary}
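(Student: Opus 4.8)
The plan is to derive the corollary directly from \autoref{lem:quad} together with the fact that the circumsquare of a Delaunay triangle in the $L_\infty$-metric is axis-aligned. Since $(u,v)$ is not in $Y_4^\infty$, \autoref{lem:quad} guarantees that $u$ and $v$ lie on opposite sides of the circumsquare $S$ of $\triangle{uvw}$. Write $\ell$ for the side length of $S$. Because opposite sides of $S$ are two parallel axis-parallel segments at distance exactly $\ell$, the coordinate along which these sides are separated differs by exactly $\ell$ between $u$ and $v$, while the other coordinate differs by at most $\ell$ (both points lie in the slab of width $\ell$ spanned by $S$). Hence $d_\infty(u,v) = \ell$.

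It then remains to bound $d_\infty(u,w)$ and $d_\infty(w,v)$ from above by $\ell$. By definition of the circumsquare, all three vertices $u$, $v$, $w$ lie on $\partial S$, hence in $S$. Any two points of an axis-aligned square of side $\ell$ differ by at most $\ell$ in each coordinate, so their $L_\infty$-distance is at most $\ell$; applying this to the pairs $(u,w)$ and $(w,v)$ gives $d_\infty(u,w) \le \ell$ and $d_\infty(w,v) \le \ell$. Combining these with $d_\infty(u,v) = \ell$ yields $d_\infty(u,v) \ge \max\{d_\infty(u,w), d_\infty(w,v)\}$, as claimed.

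Since this is a short deduction, I do not anticipate a genuine obstacle; the only point requiring a little care is the observation that ``$u$ and $v$ lie on opposite sides of $S$'' forces the $L_\infty$-distance between them to equal the side length $\ell$ \emph{exactly} (not merely to be at most $\ell$), including the degenerate situations in which $u$ or $v$ happens to coincide with a corner of $S$ — but even then the separating coordinate still differs by $\ell$, so the argument goes through unchanged.
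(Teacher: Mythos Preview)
Your argument is correct and is precisely the deduction the paper has in mind: the paper gives no explicit proof of this corollary, simply declaring it ``an immediate consequence of \autoref{lem:quad},'' and your reasoning --- that opposite sides of the circumsquare force $d_\infty(u,v)$ to equal the side length $\ell$, while all three vertices lying on the circumsquare bounds the other two $L_\infty$-distances by $\ell$ --- is exactly how that immediacy unpacks.
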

These together yield the following result. 

\begin{lemma}
For each edge $(u,v) \in Del^\infty$, there is a path in $Y_4^\infty$ of length
\[
    d_{Y_4^\infty} (u, v) \le (1+\sqrt{2}) \cdot d_\infty(u, v)
\]
\label{lem:bonus}
\end{lemma}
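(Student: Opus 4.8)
The plan is to split on whether the Delaunay edge $(u,v)$ is itself an edge of $Y_4^\infty$. If it is, then the single edge $(u,v)$ is already a path in $Y_4^\infty$, of length $d_2(u,v)\le\sqrt2\,d_\infty(u,v)<(1+\sqrt2)\,d_\infty(u,v)$, using only the elementary inequality $d_2\le\sqrt2\,d_\infty$. So the substantive case is $(u,v)\notin Y_4^\infty$, and there no induction is needed: I will produce an explicit two-edge detour that lies entirely in $Y_4^\infty$.

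In that case I would fix a triangle $\triangle uvw\in Del^\infty$ having $(u,v)$ as one of its edges; such a triangle exists whenever $V$ has three non-collinear vertices, and in the remaining degenerate configurations $Del^\infty$ is a path and every one of its edges belongs to $Y_4^\infty$, returning us to the first case. Applying \autoref{lem:quad}: at least two of the three edges of $\triangle uvw$ lie in $Y_4^\infty$, and since $(u,v)$ is not one of them, both $(u,w)$ and $(w,v)$ are. Hence the concatenation $(u,w)\oplus(w,v)$ is a path in $Y_4^\infty$, and the whole lemma reduces to showing $d_2(u,w)+d_2(w,v)\le(1+\sqrt2)\,d_\infty(u,v)$.

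For this geometric estimate I would invoke the second half of \autoref{lem:quad}: $u$ and $v$ lie on opposite sides of the circumsquare $S$ of $\triangle uvw$, while $w$, being a vertex of the triangle, lies on $\partial S\subseteq S$. Writing $s$ for the side length of $S$, a rotation by a multiple of $\pi/2$ lets us assume $S=[0,s]^2$ with $u$ on the left side and $v$ on the right side, so that $d_\infty(u,v)=s$. Letting $x_w\in[0,s]$ be the $x$-coordinate of $w$ and bounding both vertical displacements by $s$ gives $d_2(u,w)+d_2(w,v)\le\sqrt{x_w^2+s^2}+\sqrt{(s-x_w)^2+s^2}$. Each summand is a convex function of $x_w$, so the right-hand side is convex on $[0,s]$ and attains its maximum at an endpoint, where it equals $s+\sqrt2\,s=(1+\sqrt2)s=(1+\sqrt2)\,d_\infty(u,v)$. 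Combining the two cases proves the claim.

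I do not anticipate a real obstacle. The only points that need care are that $w$ genuinely sits inside $S$ and that $d_\infty(u,v)$ equals the side length of $S$ exactly (both immediate from \autoref{lem:quad} and the definition of the circumsquare), and that the endpoint maximum is argued by convexity rather than by a case analysis on which side of $S$ the vertex $w$ lies on — the convexity argument absorbs the case $w$ on the same vertical side as $u$ or $v$ (then $x_w\in\{0,s\}$) with no extra work. Everything else is routine bookkeeping.
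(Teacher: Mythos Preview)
Your proof is correct and follows essentially the same approach as the paper's: split on whether $(u,v)\in Y_4^\infty$, and in the nontrivial case use \autoref{lem:quad} to get the two-edge detour through $w$ and the opposite-sides-of-the-circumsquare property to bound $d_2(u,w)+d_2(w,v)$. The only difference is cosmetic---you justify the geometric bound via a convexity argument in $x_w$, whereas the paper simply asserts the extremal configuration (one edge a side, the other a diagonal of the circumsquare).
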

\begin{proof}
If $(u,v)$ is in $Y_4^\infty$, then the theorem clearly holds. So assume that $(u,v) \not\in Y_4^\infty$. 
Let $T = \triangle uvw$ be a triangle in $Del^\infty$ with side $(u,v)$. By~\autoref{lem:quad}, both $(u,w)$ and $(v,w)$ are in $Y_4^\infty$. 
Thus $(u,w) \oplus (w,v)$ is a path in $Y_4^\infty$ between $u$ and $v$ of length $d_2(u,w) + d_2(w,v)$.  Also by~\autoref{lem:quad}, 
$u$ and $v$ lie on opposite sides of $T$'s circumsquare. This implies that $d_2(u,w) + d_2(w,v)$ is bounded above by 
$(1+\sqrt{2})d_\infty(u, v)$, which is achieved when one of $(u,w)$ and $(v,w)$ is a side, and the other is a diagonal of $T$'s circumsquare. 
\end{proof}

\noindent
%The main result of this section, showing that the stretch factor of $Y_4^\infty$ is bounded above by $\sqrt{13+8\sqrt{2}} \lesssim 4.931$, follows from the following key theorem: 
The following theorem is key in establishing an upper bound on the stretch factor of $Y_4^\infty$. 
\begin{theorem}
\label{thm:key}
Let $a$ and $b$ be arbitrary vertices in $V$.  If $x = d_\infty(a, b) = \max\{d_x(a, b), d_y(a, b)\}$ and 
$y = \min\{d_x(a, b), d_y(a, b)\}$, then 
\[
d_{Y_4^\infty} (a, b) \le 2(1+\sqrt{2})x+y
\]
\end{theorem}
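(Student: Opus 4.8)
The plan is to route from $a$ to $b$ through $Del^\infty$ and then turn every Delaunay edge on that route into a $Y_4^\infty$ path via \autoref{lem:bonus}. The useful observation is that \autoref{lem:bonus} lets us treat each edge $(p,q)\in Del^\infty$ as a ``virtual edge'' of $Y_4^\infty$ of weight $(1+\sqrt2)\,d_\infty(p,q)$, and since $(1+\sqrt2)(\sqrt2-1)=1$ it therefore suffices to exhibit a path $\pi$ from $a$ to $b$ in $Del^\infty$ whose $L_\infty$-length $\sum_{(p,q)\in\pi}d_\infty(p,q)$ is at most $2x+(\sqrt2-1)y$: then $d_{Y_4^\infty}(a,b)\le(1+\sqrt2)\bigl(2x+(\sqrt2-1)y\bigr)=2(1+\sqrt2)x+y$. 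So the whole theorem reduces to a statement purely about $Del^\infty$ distances in the $L_\infty$ metric.

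Normalize so that $b\in Q_1(a)$ with $d_x(a,b)=x\ge d_y(a,b)=y$, put $a$ at the origin and $b=(x,y)$. If $(a,b)\in Del^\infty$ we are done at once by \autoref{lem:bonus}. Otherwise the open segment $ab$ stabs a sequence of Delaunay triangles $T_1,\dots,T_m$ with $a\in T_1$ and $b\in T_m$, crossing their shared diagonals $e_i=T_i\cap T_{i+1}$ in order from $a$ to $b$. Each $e_i$ has one endpoint weakly above $ab$ and one weakly below, so the ``upper'' endpoints trace a path $U\colon a=u_0,u_1,\dots,u_p=b$ in $Del^\infty$ and the ``lower'' endpoints trace a path $L\colon a=\ell_0,\dots,\ell_q=b$ in $Del^\infty$; together with $ab$ these two chains bound the corridor $\bigcup_iT_i$. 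I would let $\pi$ be whichever of $U$ and $L$ has smaller $L_\infty$-length, and prove that this cheaper chain has $L_\infty$-length at most $2x+(\sqrt2-1)y$.

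To bound the chains I would lean on the empty-circumsquare property of Delaunay triangles. For each corridor triangle $T_i$ the axis-aligned circumsquare through its three vertices contains no vertex of $V$ in its interior, and \autoref{cor:virtual} (together with \autoref{lem:quad}) applied to the stabbed diagonal $e_i$ shows that $e_i$ is the $d_\infty$-longest side of $T_i$ and that its endpoints lie on opposite sides of that circumsquare; hence the two non-stabbed sides of $T_i$ have combined $L_\infty$-length at most $(1+\sqrt2)\,d_\infty(e_i)$, and the extent of $T_i$ in the direction transverse to $ab$ is controlled by the length of the sub-segment of $ab$ that $T_i$ covers. Summing these per-triangle estimates along the cheaper of $U$ and $L$, the component of the chain length measured along $ab$ telescopes to $x$, while the transverse (up-and-back) excursions — which the emptiness of the circumsquares prevents from accumulating — are bounded by an additional $x+(\sqrt2-1)y$, giving the target $2x+(\sqrt2-1)y$. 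Degenerate configurations ($ab$ through a vertex, or several cocircular vertices on a circumsquare) are removed by an infinitesimal perturbation, since $d_{Y_4^\infty}$ and $d_\infty$ vary continuously with the positions of the points.

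The main obstacle is exactly this corridor-length estimate: a single corridor triangle can be far ``taller'' than $ab$ is long, so the bound cannot come from any one triangle in isolation — the empty-circumsquare property is what forbids a long run of tall triangles, but converting that into a clean telescoping inequality, and in particular pinning the leading constant at $2$ (rather than something larger) while holding the excursion term down to $(\sqrt2-1)y$, requires a careful amortized accounting over consecutive triangles, most likely with a case split on whether each triangle's apex lies inside or outside the rectangle $R(a,b)$. Everything else — the reduction in the first paragraph, the $(a,b)\in Del^\infty$ base case, and the bookkeeping once the chain bound is available — is routine.
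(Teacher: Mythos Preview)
Your reduction in the first paragraph is sound: if some $Del^\infty$ path $\pi$ from $a$ to $b$ satisfied $\sum_{(p,q)\in\pi}d_\infty(p,q)\le 2x+(\sqrt2-1)y$, then \autoref{lem:bonus} would indeed finish. But that $L_\infty$-length bound is a statement about $Del^\infty$ that is not available anywhere in the paper or in~\cite{BG+12}, and your attempt to prove it contains a concrete error. You invoke \autoref{lem:quad} and \autoref{cor:virtual} on the stabbed diagonal $e_i$ to conclude that its endpoints lie on opposite sides of the circumsquare and that it is the $d_\infty$-longest side of $T_i$; but those results have the hypothesis that the edge is \emph{absent} from $Y_4^\infty$, and nothing forces a stabbed corridor diagonal to be absent (all three sides of a $Del^\infty$ triangle may lie in $Y_4^\infty$). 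With that step gone, your per-triangle control on the transverse excursion evaporates, and you yourself identify the remaining ``amortized accounting'' as the main obstacle and leave it undone. Note also that merely feeding the known Euclidean bound for $Del^\infty$ from~\cite{BG+12} into \autoref{lem:bonus} (via $d_\infty\le d_2$) recovers only $d_{Y_4^\infty}(a,b)\le(1+\sqrt2)\,d_{Del^\infty}(a,b)$, i.e.\ the old $(1+\sqrt2)\sqrt{4+2\sqrt2}$ stretch of \autoref{lem:bk14}; so the entire improvement must come from the very estimate you have not established.

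The paper's proof is organized differently and avoids this trap. It argues by induction on $d_\infty(a,b)$: if $R(a,b)$ contains a vertex one splits there and recurses; if $R(a,b)$ is empty one walks the corridor of $Del^\infty$ triangles as you do, but does \emph{not} pay $(1+\sqrt2)d_\infty$ on every edge. Instead \autoref{lem:key} locates the first ``inductive'' circumsquare $S_j$ and shows that the chain leading up to it consists of $Del^\infty$ edges whose endpoints sit on adjacent sides of their circumsquares, hence by \autoref{lem:quad} lie in $Y_4^\infty$ outright and are charged only their Manhattan length; a short prefix is handled by importing $d_{Del^\infty}(a,\ell_i)\le 2x_{\ell_i}$ from~\cite{BG+12} together with the $(1+\sqrt2)$-spanner property, and only the single diagonal of $S_j$ is charged the full $(1+\sqrt2)d_\infty$ via \autoref{lem:bonus}. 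It is this mixture of cheap (direct $Y_4^\infty$) and expensive (via \autoref{lem:bonus}) edges, followed by one more inductive step from the inductive point toward $b$, that yields $2(1+\sqrt2)x+y$; a uniform $(1+\sqrt2)$ blow-up of a $Del^\infty$ path, as in your plan, cannot reach that constant without the new $L_\infty$-length input you are missing.
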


\noindent
We delay the proof of~\autoref{thm:key} until we establish some essential ingredients. The main result of this section, stated in~\autoref{thm:main} below, is an immediate consequence of~\autoref{thm:key}.
\begin{theorem}
The stretch factor of $Y_4^\infty$ on a set of points $V$ is at most 
\[\sqrt{13+8\sqrt{2}} \lesssim 4.931\]
\label{thm:main}
\end{theorem}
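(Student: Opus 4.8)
The plan is to derive \autoref{thm:main} from \autoref{thm:key} by a one-variable optimization. Fix arbitrary distinct vertices $a, b \in V$ and set $x = d_\infty(a,b) = \max\{d_x(a,b), d_y(a,b)\}$ and $y = \min\{d_x(a,b), d_y(a,b)\}$, so that $0 \le y \le x$ and $d_2(a,b) = \sqrt{x^2 + y^2}$. A shortest path between $a$ and $b$ in the complete graph on $V$ is the segment $(a,b)$ of length $d_2(a,b)$, so by definition the stretch factor of $Y_4^\infty$ equals
\[
    \sup_{a \ne b} \frac{d_{Y_4^\infty}(a,b)}{d_2(a,b)}.
\]

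First I would invoke \autoref{thm:key} to get $d_{Y_4^\infty}(a,b) \le 2(1+\sqrt{2})\,x + y$, reducing the problem to bounding $\big(2(1+\sqrt{2})\,x + y\big)/\sqrt{x^2+y^2}$ over all $(x,y)$ with $0 \le y \le x$. Writing $c = 2(1+\sqrt{2})$ and applying the Cauchy--Schwarz inequality to the vectors $(c,1)$ and $(x,y)$ gives
\[
    c\,x + y \le \sqrt{c^2 + 1}\,\sqrt{x^2 + y^2},
\]
with equality when $(x,y)$ is a positive multiple of $(c,1)$; since $c = 2(1+\sqrt2) > 1$, this maximizer satisfies the constraint $x \ge y \ge 0$, so the bound is attained in the limit and cannot be improved by the optimization step. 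Hence the stretch factor is at most
\[
    \sqrt{c^2+1} = \sqrt{4(1+\sqrt{2})^2 + 1} = \sqrt{13 + 8\sqrt{2}} \lesssim 4.931,
\]
which is exactly the claim. It is worth noting in passing that the extremal ratio comes from the balanced configuration $(x,y)\propto(2(1+\sqrt2),1)$, not from either degenerate case $y=0$ or $y=x$, which is where the savings over the crude $L_\infty$-to-$L_2$ conversion in \autoref{lem:bk14} originates.

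So this theorem itself is a short corollary; the genuine obstacle is \autoref{thm:key}, whose proof is deferred. I expect that argument to be the technical heart of the section: for arbitrary $a,b$ one must exhibit a concrete $Y_4^\infty$-path whose length is controlled by $2(1+\sqrt2)x+y$, and I would anticipate it proceeding by following a sequence of $Del^\infty$ triangles (or $R(a,b)$-related structure) from $a$ toward $b$, at each step replacing a missing Yao edge by a detour through a third triangle vertex via \autoref{lem:bonus} and \autoref{cor:virtual}, and charging the accumulated detour length against the horizontal extent $x$ of $R(a,b)$ while the $y$-term absorbs the single transverse displacement. Making that charging scheme tight enough to yield the constant $2(1+\sqrt2)$ rather than something larger is the step I would expect to require the most care.
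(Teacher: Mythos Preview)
Your proof is correct and matches the paper's own argument: both derive the bound directly from \autoref{thm:key} by maximizing $\big(2(1+\sqrt{2})x+y\big)/\sqrt{x^2+y^2}$, with the maximum $\sqrt{13+8\sqrt{2}}$ attained at $x/y = 2(1+\sqrt{2})$. Your use of Cauchy--Schwarz simply makes explicit the one-line optimization the paper leaves implicit.
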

\begin{proof}
By~\autoref{thm:key}, the stretch factor of $Y_4^\infty$ is no greater than the maximum of the function 
\[
\frac{2(1+\sqrt{2})x+y}{\sqrt{x^2+y^2}}
\]
which is equal to $\sqrt{13+8\sqrt{2}}$ when $x/y = 2(1+\sqrt{2})$. 
\end{proof}

%
%%%%%%%%%%%%%%%%%%%%%%%%%%%%%%%%%Figure Begin
\begin{figure}[hpt]
\centering
\includegraphics[width=0.99\linewidth]{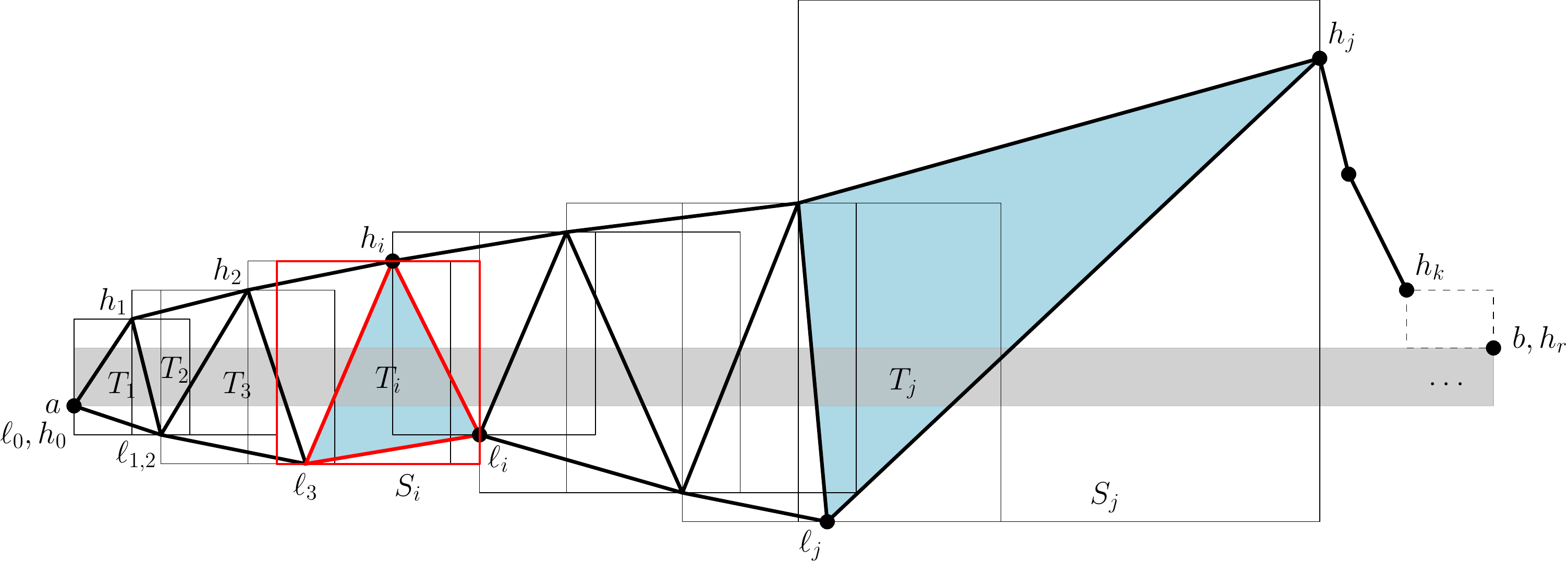}
\caption{\autoref{lem:key}: Squares $S_1, S_2, \ldots, S_{j-1}$ are not inductive. Square $S_j$ is inductive. Vertex $\ell_i$ is on the east side of square $S_i$.}
\label{fig:promising}
\end{figure}
%%%%%%%%%%%%%%%%%%%%%%%%%%%%%%%%%Figure End
%
Our approach in proving~\autoref{thm:key} mimics the approach 
used in~\cite{BG+12} to establish a stretch factor of $\sqrt{4+2\sqrt{2}}$ for $Del^\infty$. Before describing this 
approach, we need to introduce some definitions. To make it easy for the interested reader, most of the terminology 
in this section is similar to the one used in~\cite{BG+12}. We assume without loss of generality that 
$a$ has coordinates $(0, 0)$. In this case, the definitions used in the statement of~\autoref{thm:key} 
imply that $b$ has coordinates $(x, y)$. 
Let $T_1, T_2, \ldots, T_r$ be the sequence of triangles in $Del^\infty$ that intersect the line segment $ab$ when moving 
from $a$ to $b$. For each triangle $T_i$, let $(h_i, \ell_i)$ be the rightmost edge of $T_i$ that intersects $ab$, with $h_i$ above 
$ab$ and $\ell_i$ below $ab$. We also let $h_0 = \ell_0 = a$, $h_r = b$ and $\ell_{r-1} = \ell_r$. 
Note that some vertices coincide: either $h_i = h_{i-1}$ and $T_i = \triangle{h_i\ell_{i-1}\ell_{i}}$, or 
$\ell_i = \ell_{i-1}$ and $T_i = \triangle{h_{i-1}h_{i}\ell_i}$. Let $S_i$ be the circumsquare of $T_i$. 
We call the square $S_i$ \emph{inductive} if $d_\infty(h_i, \ell_i) = d_x(h_i, \ell_i)$. The vertex $h_i$ or $\ell_i$ with the 
larger $x$-coordinate is the \emph{inductive point} of $S_i$. In~\autoref{fig:promising} for example, $h_j$ is the inductive point of $S_j$. 

One key ingredient in proving~\autoref{thm:key} is the following lemma. 
\begin{lemma}
Assume that $R(a, b)$ is empty. If no square $S_1, \ldots, S_r$ is inductive, then
\[d_{Y_4^\infty} (a, b) \le 2(1+\sqrt{2})x + y .
\]
Otherwise, let $S_j$ be the first inductive square in the sequence $S_1, \ldots, S_r$. If $h_j$ is the inductive point of 
$S_j$, then 
\[d_{Y_4^\infty} (a, h_j) +(y_{h_j}-y) \le 2(1+\sqrt{2})x_{h_j} .
\]
If $\ell_j$ is the inductive point of $S_j$, then 
\[d_{Y_4^\infty} (a, \ell_j) - y_{\ell_j} \le 2(1+\sqrt{2})x_{\ell_j} .
\]
\label{lem:key}
\end{lemma}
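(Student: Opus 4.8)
\textbf{Proof plan for Lemma~\ref{lem:key}.}
The plan is to induct on the number of triangles $r$ crossed by the segment $ab$, following the ``staircase'' argument of~\cite{BG+12} but keeping track of the extra $L_\infty$-machinery provided by~\autoref{lem:quad}, \autoref{cor:virtual} and~\autoref{lem:bonus}. The base case $r=1$ is a single triangle $T_1$ with circumsquare $S_1$; here $a$ and $b$ are two corners of (or lie on the boundary of) $S_1$, and either $S_1$ is non-inductive---in which case $d_\infty(h_1,\ell_1)=d_y(h_1,\ell_1)$, the three edges behave well, and \autoref{lem:bonus} applied to the missing edge gives the bound $2(1+\sqrt2)x+y$ directly---or $S_1$ is inductive and $j=1$, in which case the claimed inequality about $h_1$ (resp.\ $\ell_1$) is just a statement about a single $Y_4^\infty$-edge or a two-edge detour around $w$, again estimated via \autoref{lem:bonus}.

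For the inductive step I would look at the first triangle $T_1$ and its circumsquare $S_1$. If $S_1$ is \emph{not} inductive, then $d_\infty(h_1,\ell_1)=d_y(h_1,\ell_1)$, so the side of $S_1$ ``facing'' $b$ is a vertical side and the progress made in the $x$-direction is controlled; by \autoref{lem:quad} two of the three edges of $T_1$ lie in $Y_4^\infty$, and the third (if missing) is spanned with ratio $1+\sqrt2$ in the $L_\infty$ metric by \autoref{lem:bonus}. One then moves the ``front'' of the staircase from $\{h_0,\ell_0\}=\{a\}$ to $\{h_1,\ell_1\}$: exactly one of these equals $a$ and the other is a new vertex, and I would re-apply the lemma to the shorter sequence $S_2,\dots,S_r$, which now crosses the segment from the new front vertex to $b$ inside a rectangle that is still empty (here one must argue that emptiness of $R(a,b)$, together with the circumsquare-emptiness property of $Del^\infty$, is inherited). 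Adding the one or two new edge-weights to the inductive bound and bounding them by $1+\sqrt2$ times the horizontal extent of $S_1$ plus the vertical extent is where the constant $2(1+\sqrt2)$ in front of $x$ and the coefficient $1$ in front of $y$ get their values---the factor $2$ absorbing the fact that the staircase can zig-zag above and below $ab$, and \autoref{cor:virtual} ensuring that a skipped edge never costs more than the $L_\infty$-width already accounted for. If instead $S_1$ \emph{is} inductive, then $j=1$ and there is nothing to recurse on: the two claimed inequalities are again immediate from \autoref{lem:bonus} and the fact that the inductive point has the larger $x$-coordinate, so $x_{h_1}$ (or $x_{\ell_1}$) already dominates $d_\infty(a,\cdot)$.

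The real work, and the part I expect to be the main obstacle, is the bookkeeping that shows the per-step cost telescopes correctly: when the front advances from $\{h_{i-1},\ell_{i-1}\}$ to $\{h_i,\ell_i\}$ one adds an edge on the side that stays fixed plus possibly a detour on the side that moves, and one must show the sum of all these contributions over $i=1,\dots,j-1$ (the non-inductive prefix) is at most $2(1+\sqrt2)$ times the total horizontal span plus $1$ times the vertical span, \emph{and} that the residual term $(y_{h_j}-y)$ or $-y_{\ell_j}$ coming out of the recursion lines up with the boundary of $S_j$ so that the final inequality has exactly the stated form. This requires carefully distinguishing the four configurations of how $S_j$ sits relative to the strip around $ab$ (inductive point above vs.\ below, to the left vs.\ right of the previous front) and checking each against the circumsquare-emptiness of $Del^\infty$; these are finitely many elementary cases but they are exactly the cases where an off-by-$\sqrt2$ slip would break the final bound of \autoref{thm:main}, so I would treat them explicitly rather than ``by symmetry.'' Once Lemma~\ref{lem:key} is in hand, \autoref{thm:key} follows by one more application to the inductive point---using \autoref{lem:bonus} to go from $h_j$ (or $\ell_j$) back to $b$---plus the observation that $R(a,b)$ can be assumed empty after a standard reduction to the first vertex of $V$ encountered inside it.
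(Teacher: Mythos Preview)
Your plan diverges from the paper's argument in a structural way. The paper does \emph{not} induct on $r$. Instead it builds a single three-piece path $a \rightsquigarrow \ell_i \rightsquigarrow \ell_j \rightsquigarrow h_j$ (respectively $a \rightsquigarrow h_i \rightsquigarrow h_j \rightsquigarrow \ell_j$ in the symmetric case), where $i\le j$ is chosen as the last index, walking leftward from $j$, with $\ell_i$ on the east side of $S_i$ (or $i=0$). The piece you are missing entirely is that the segment $a\rightsquigarrow \ell_i$ is bounded by quoting Lemmas~9 and~11 of~\cite{BG+12} as a black box to get $d_{Del^\infty}(a,\ell_i)\le 2x_{\ell_i}$, and then multiplying by the $(1+\sqrt2)$ spanning ratio of $Y_4^\infty$ over $Del^\infty$ from \autoref{lem:bk14}. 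The middle piece $\ell_i\rightsquigarrow \ell_j$ is handled by observing, via \autoref{lem:quad}, that every $Del^\infty$ edge along that lower chain actually lies in $Y_4^\infty$ (its endpoints sit on the west and south sides of the circumsquare), so a plain triangle inequality suffices with no $(1+\sqrt2)$ loss. Only the last hop $\ell_j\rightsquigarrow h_j$ uses \autoref{lem:bonus}. The coefficient $2(1+\sqrt2)$ thus arises as the product $2\cdot(1+\sqrt2)$ coming from~\cite{BG+12} and \autoref{lem:bk14}, not from any per-triangle telescoping.

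Your induction-on-$r$ scheme has a genuine gap. The triangles $T_1,\dots,T_r$ and squares $S_1,\dots,S_r$ are defined relative to the fixed segment $ab$; once you advance the ``front'' to $h_1$ or $\ell_1$, the triangles crossed by the new segment to $b$ need not be $T_2,\dots,T_r$, the rectangle $R(h_1,b)$ is typically \emph{not} empty (since $h_1$ lies above $R(a,b)$), and the notion ``inductive'' is tied to the original coordinate frame. So ``re-apply the lemma to the shorter sequence $S_2,\dots,S_r$'' is not well-posed as written. (Also, for $T_1$ neither $h_1$ nor $\ell_1$ equals $a$: $a$ is the third vertex of $T_1$, opposite the edge $(h_1,\ell_1)$.) A step-by-step telescoping argument that avoids~\cite{BG+12} might be salvageable, but it would have to reproduce the content of their Lemmas~9 and~11, and nothing in your sketch explains where the factor $2$ in front of $(1+\sqrt2)x$ would come from if every non-inductive step is charged $(1+\sqrt2)\cdot(\text{horizontal extent})+(\text{vertical extent})$; overlapping circumsquares mean the horizontal extents do not sum to $x$.
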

\begin{proof}
Because $S_j$ is inductive, $d_\infty(\ell_j, h_j) = |x_{h_j} - x_{\ell_j}|$ (by definition). This along with~\autoref{lem:bonus} implies 
\begin{equation}
d_{Y_4^\infty}(\ell_j, h_j) \le (1+\sqrt{2})|x_{h_j} - x_{\ell_j}|
\label{eq:lh}
\end{equation}
Assume first that $h_j$ is the inductive point of $S_j$, meaning that $x_{h_j} > x_{\ell_j}$. In this case 
$h_j$ lies on the east side of $S_j$ and $\ell_j$ lies on the west or south side of $S_j$. 
Let $T_i$ be the first triangle encountered when moving from $T_j$ leftward toward $T_1$, 
such that either $i = 0$ or $\ell_i$ lies on the east side of $T_i$. Refer to~\autoref{fig:promising}. Note that $T_i \neq T_j$, since $\ell_j$ does not lie on the east side of $T_j$. Then all edges in $Del^\infty$ on the path $p_{ij} = \ell_i, \ell_{i+1}, \ldots \ell_j$ 
span between the west and south sides of their enclosing square, and by~\autoref{lem:quad} they are also in $Y_4^\infty$. Also note that the path 
$p_{ij}$ descends vertically, therefore $y_{\ell_i} > y_{\ell_j}$. These together with the triangle inequality 
applied on each edge of $p_{ij}$ imply
\begin{equation}
d_{Y_4^\infty}(\ell_i, \ell_j) < (x_{\ell_j} - x_{\ell_i}) + (y_{\ell_i}-y_{\ell_j}) .
\label{eq:ll}
\end{equation}
We now use the combined results from Lemmas 9 and 11 from~\cite{BG+12} showing that 
\[
d_{Del^\infty} (a, \ell_i) \le 2x_{\ell_i} .
\]
This along with the fact that $Y_4^\infty$ is a $(1+\sqrt{2})$-spanner of $Del^\infty$ implies that 
\begin{equation}
d_{Y_4^\infty} (a, \ell_i) \le 2(1+\sqrt{2})x_{\ell_i} .
\label{eq:al}
\end{equation}
We are now ready to evaluate
\begin{eqnarray*}
d_{Y_4^\infty}(a, h_j)+(y_{h_j}-y) & \le & d_{Y_4^\infty} (a, \ell_i) + d_{Y_4^\infty} (\ell_i, \ell_j) + d_{Y_4^\infty} (\ell_j, h_j) + y_{h_j}
\end{eqnarray*}
Substituting inequalities~(\ref{eq:lh}),~(\ref{eq:ll}) and~(\ref{eq:al}) in the right hand side above yields
\begin{eqnarray}
\nonumber d_{Y_4^\infty}(a, h_j)+(y_{h_j}-y)  & < & 2(1+\sqrt{2})x_{\ell_i} + (x_{\ell_j} - x_{\ell_i}) + (y_{\ell_i}-y_{\ell_j}) + (1+\sqrt{2})(x_{h_j} - x_{\ell_j}) + y_{h_j}\\
& < & (2 + 2\sqrt{2}-1)x_{\ell_i} - (1+\sqrt{2}) x_{\ell_j}+ (x_{\ell_j} + y_{h_j} - y_{\ell_j}) + (1+\sqrt{2})x_{h_j}
\label{eq:hj1}
\end{eqnarray}
We safely ignored the quantity $y_{\ell_i} < 0$ in the right hand side of the inequality above. Recall that 
$d_\infty(\ell_j, h_j) = x_{h_j} - x_{\ell_j}$ (since $S_j$ is inductive), therefore $x_{\ell_j} + y_{h_j} - y_{\ell_j} \le x_{h_j}$. 
Also note that $x_{\ell_j} > x_{\ell_i} \ge 0$, therefore $-x_{\ell_j} < -x_{\ell_i}$. Substituting these inequalities in~(\ref{eq:hj1})
yields
\begin{eqnarray*}
d_{Y_4^\infty}(a, h_j)+(y_{h_j}-y) & < & (1+2\sqrt{2}-1-\sqrt{2})x_{\ell_i} + x_{h_j} + (1+\sqrt{2})x_{h_j}\\
&\le & \sqrt{2}x_{\ell_i} + x_{h_j} + (1+\sqrt{2})x_{h_j} \\
& < & 2(1+\sqrt{2})x_{h_j}
\end{eqnarray*}
This latter inequality follows from the fact that $x_{\ell_i} < x_{h_j}$. 

%
%%%%%%%%%%%%%%%%%%%%%%%%%%%%%%%%%Figure Begin
\begin{figure}[hpt]
\centering
\includegraphics[width=0.99\linewidth]{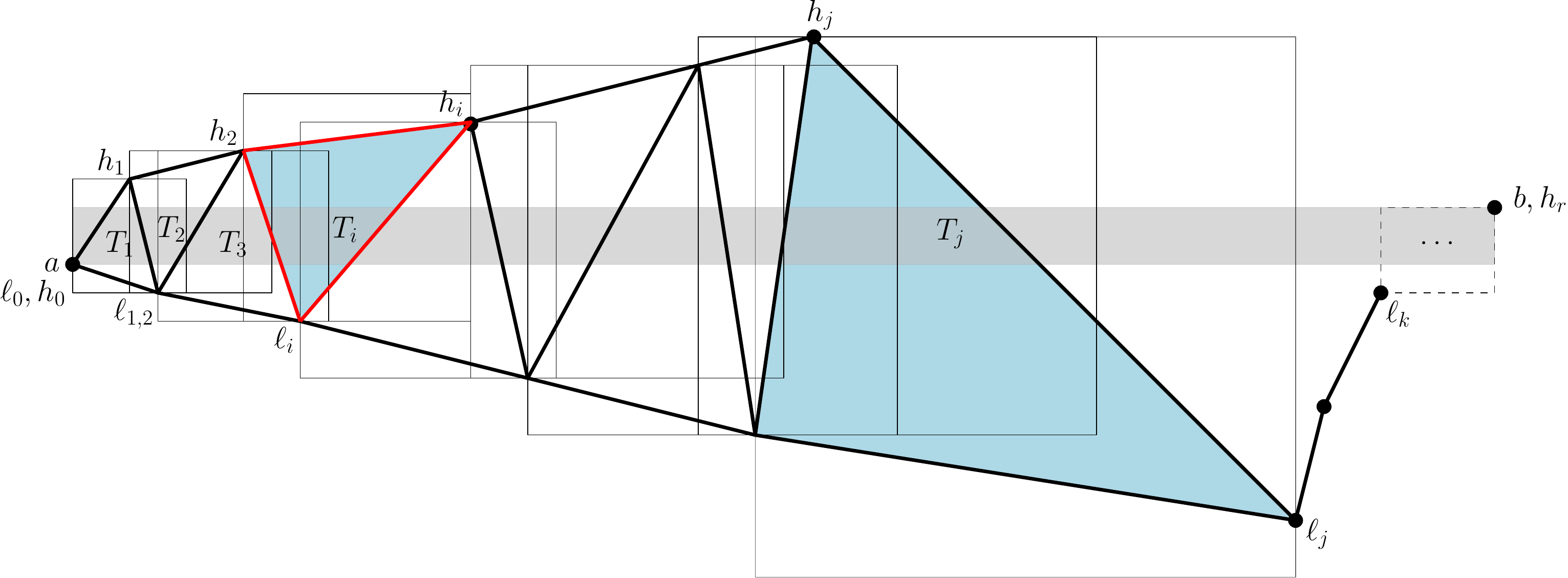}
\caption{\autoref{lem:key}: Squares $S_1, S_2, \ldots, S_{j-1}$ are not inductive. Square $S_j$ is inductive. Vertex $h_i$ is on the east side of square $S_i$.}
\label{fig:promising2}
\end{figure}
%%%%%%%%%%%%%%%%%%%%%%%%%%%%%%%%%Figure End
%

Assume now that $\ell_j$ is the inductive point of $S_j$, so $\ell_j$ lies on
the east side of $S_j$ and $h_j$ lies on the west or north side of $S_j$. The analysis for this case is symmetric to the one used for the previous case. 
Redefine $T_i$ to be the first triangle encountered when moving from $T_j$ leftward toward $T_1$, 
such that either $i = 0$ or $h_i$ lies on the east side of $T_i$. Refer to~\autoref{fig:promising2}. 
Arguments similar to the ones used for the previous case show that 
\begin{eqnarray*}
d_{Y_4^\infty}(a, h_i)- y_{\ell_j} & \le & d_{Y_4^\infty} (a, h_i) + d_{Y_4^\infty} (h_i, h_j) + d_{Y_4^\infty} (h_j, \ell_j) - y_{\ell_j}  \\
& \le & 2(1+\sqrt{2})x_{h_i} + (x_{h_j} - x_{h_i}) + (y_{h_j}-y_{h_i}) + (1+\sqrt{2})(x_{\ell_j} - x_{h_j}) - y_{\ell_j}\\
& < & (2+2\sqrt{2}-1)x_{h_i} - (1+\sqrt{2})x_{h_j} + (x_{h_j} + y_{h_j}-y_{h_i}) + (1+\sqrt{2})x_{\ell_j}\\
& \le & (1+2\sqrt{2}-1-\sqrt{2})x_{h_j} + x_{\ell_j} + (1+\sqrt{2})x_{\ell_j}\\
&\le & \sqrt{2} x_{h_j} + x_{\ell_j} +(1+\sqrt{2})x_{\ell_j}\\
& < & 2(1+\sqrt{2})x_{\ell_j}
\end{eqnarray*}
In deriving these inequalities we ignored the term $-y_{\ell_j} < 0$ and used the fact that $x_{h_i} < x_{h_j} < x_{\ell_j}$ and 
$x_{h_j} + y_{h_j}-y_{h_i} < x_{h_j}  + y_{h_j} - y_{\ell_j} \le x_{\ell_j}$ (by the lemma statement that $S_j$ is inductive). 
\end{proof}

\noindent
We are now ready to prove~\autoref{thm:key}. Our proof follows closely the proof from~\cite{BG+12} used to establish a similar result in the context of $Del^\infty$, with some changes necessary to handle edges in $Del^\infty$ that do not exist in $Y_4^\infty$. 

\medskip
\noindent
\textbf{Theorem~\ref{thm:key}}\emph{
Let $a$ and $b$ be arbitrary vertices in $V$.  If $x = d_\infty(a, b) = \max\{d_x(a, b), d_y(a, b)\}$ and 
$y = \min\{d_x(a, b), d_y(a, b)\}$, then} 
\[d_{Y_4^\infty} (a, b) \le 2(1+\sqrt{2})x+y\]
\begin{proof}
%\paragraph{Proof of~\autoref{thm:key}.} 
By the theorem statement,  $a$ and $b$ are two arbitrary points in $V$ of coordinates $(0,0)$ and $(x, y)$ respectively, with 
$x = d_\infty(a, b) \ge y$. Our goal is to prove that  $d_{Y_4^\infty} (a, b) \le 2(1+\sqrt{2})x+y$. 
The proof is by induction on the $L_\infty$-distance between pairs of points in $V$. 

For the base case, assume that $a$ and $b$ are a closest pair of vertices in the $L_\infty$-metric. In this case $ab \in Del^\infty$ and, 
by~\autoref{lem:bonus}, $d_{Y_4^\infty} (a, b) \le (1+\sqrt{2})\cdot d_\infty(a, b) = (1+\sqrt{2})x$. Thus the theorem holds for the base case. 

For the induction step, assume that $a, b \in V$ are arbitrary, and that the theorem holds for all pairs of vertices in $V$ strictly closer than $d_\infty(a, b)$ in 
the $L_\infty$-metric. We discuss two cases, depending on whether the interior of $R(a,b)$ is empty or not. 

%
%%%%%%%%%%%%%%%%%%%%%%%%%%%%%%%%%Figure Begin
\begin{figure}[hpt]
\centering
\includegraphics[width=0.98\linewidth]{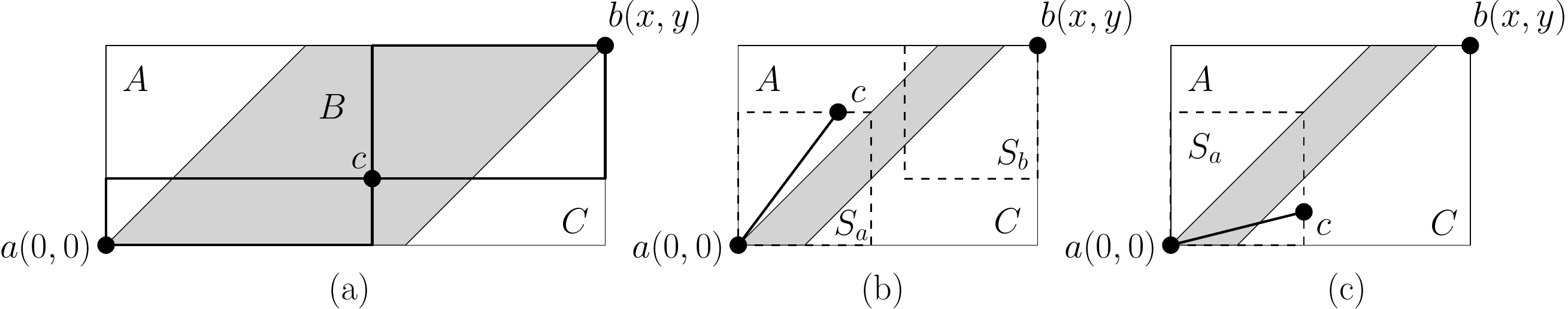}
\caption{\autoref{thm:key}: $R(a, b)$ is non-empty.}
\label{fig:rect}
\end{figure}
%%%%%%%%%%%%%%%%%%%%%%%%%%%%%%%%%Figure End
%

\noindent {\bf Case 1.} 
Assume first that the interior of $R(a,b)$ is not empty. Partition the interior of $R(a, b)$ into three regions (call them $A$, $B$ and $C$ left to right) with two lines of slope one passing through $a$ and $b$. Any point $c$ in the mid-region $B$ (shaded in~\autoref{fig:rect}) satisfies $x_c \ge y_c$ 
and $x-x_c \ge y-y_c$. If there is such a point, then we can apply induction on the vertex pairs $(a, c)$ and $(c, b)$ to obtain  
$d_{Y_4^\infty}(a, c) \le 2(1+\sqrt{2})x_c + y_c$ and 
$d_{Y_4^\infty}(c, b) \le  2(1+\sqrt{2})(x-x_c) + (y-y_c) $. 
Summing up these two inequalities yields $d_{Y_4^\infty}(a, b) \le d_{Y_4^\infty}(a, c) + d_{Y_4^\infty}(c, b) \le 2(1+\sqrt{2})x+ y$, 
so the theorem holds for this case. 

Let $S_a$ ($S_b$) be the largest \emph{empty} square with bottom left corner $a$ (top right corner $b$) that fits inside $R(a, b)$.  
If region $B$ is empty, then there must be a vertex $c \in V$, with $c \not\in \{a, b\}$, that lies on the boundary of either $S_a$ or $S_b$. 
Assume without loss of generality that there is such a vertex on the boundary of $S_a$, and let $c$ be the most 
counterclockwise such vertex (relative to $a$). In this case $(a,c) \in Y_4^\infty$ (by definition) and therefore 
\begin{equation}
d_{Y_4^\infty} (a, c) = d_2(a, c) < x_c + y_c 
\label{eq:ac}
\end{equation}
If $c$ lies in region $A$ (as in~\autoref{fig:rect}b), then $y_c > x_c$ and $x - x_c > y - y_c$. We apply induction on the vertex pair $(c, b)$ to derive $d_{Y_4^\infty} (c, b) \le 2(1+\sqrt{2})(x-x_c) + (y-y_c) $. This along with~(\ref{eq:ac}) yields  
  \begin{eqnarray*}
d_{Y_4^\infty} (a, b) & \le & d_{Y_4^\infty} (a, c) + d_{Y_4^\infty} (c, b) \\
  & < & x_c + y_c + 2(1+\sqrt{2})(x-x_c) + y-y_c \\
  & \le & 2(1+\sqrt{2})x + y - (1+2\sqrt{2})x_c \\
  & < & 2(1 + \sqrt{2})x + y
  \end{eqnarray*} 
If $c$ lies in region $C$ (as in~\autoref{fig:rect}c), then $x_c > y_c$ and $y - y_c > x - x_c$. We apply induction on the vertex pair $(c, b)$ to derive $d_{Y_4^\infty} (c, b) \le 2(1+\sqrt{2})(y-y_c) + (x-x_c) $. This along with~(\ref{eq:ac}) yields  
   \begin{eqnarray*}
   d_{Y_4^\infty} (a, b) & \le & d_{Y_4^\infty} (a, c) + d_{Y_4^\infty} (c,b)\\
   & \le & x_c + y_c + 2(1+\sqrt{2})(y-y_c) + x-x_c \\
   & \le & 2(1+\sqrt{2})y + x - (1+2\sqrt{2})y_c \\
   & < & (1 + 2\sqrt{2})y + x + y \\
   & < & 2(1+\sqrt{2})x + y
   \end{eqnarray*} 
This latter inequality follows immediately from the fact that $y < x$.   

\medskip
\noindent {\bf Case 2.} 
Assume now that the interior of $R(a,b)$ is empty. 
If no square  in the sequence $S_1, S_2, \ldots, S_r$ is inductive, then by~\autoref{lem:key} we have $d_{Y_4^\infty}(a,b) \le 2(1+\sqrt{2})x+y$ and the theorem holds. Otherwise, let $ S_j$ be the first inductive square in the sequence $S_1, S_2, \ldots, S_r$. 

Assume first that $h_j$ is the inductive point of $S_j$ (so $h_j$ lies on the east side of $S_j$). Let $h_k$ be the first vertex in the sequence $h_j, h_{j+1}, \ldots h_r=b$ such that 
\begin{equation}
x-x_{h_k} \ge y_{h_k}-y > 0
\label{eq:hjk}
\end{equation}
Refer to~\autoref{fig:promising}. Inequality~(\ref{eq:hjk}) implies that $h_k$ is closer to $b$ in the $L_\infty$ metric than $a$ is. This enables us to use induction to determine an upper bound on $d_{Y_4^\infty} (h_k, b)$. Before we do so, note that each edge $(h_p,h_{p+1})$, for any $j \le p < k$, has its endpoints on the north and east sides of its enclosing squares $S_{p+1}$. (The only other alternatives would be for $(h_p,h_{p+1})$ to span between the west and north sides, or between the west and east sides of $S_{p+1}$. In each of these cases $S_{p+1}$, which must pass through a vertex $\ell_{p+1}$ below $b$, would extend too far to the right and include the endpoint $b$, since the horizontal distance from $h_p$ to $b$ is no longer than the vertical distance from $h_p$ to $b$. This contradicts the fact that  $S_{p+1}$ is empty.)  This further implies that the path $h_j, h_{j+1}, \ldots, h_k$ is in $Y_4^\infty$ (by~\autoref{lem:quad}). This along with the triangle inequality applied on each edge along this path yields 
\[
p_{Y_4^\infty}(h_j, h_k) < (x_{h_k}-x_{h_j}) + (y_{h_j}-y_{h_k})
\]
%$ and $x-x_{h_k} \ge y_{h_k}-y \ge 0.$ Since $h_k$ is closer to b, we can apply induction to bound $ d_{Y_4^\infty}(h_k,b)$. 
This observation together with~\autoref{lem:key} used to bound $d_{Y_4^\infty} (a, h_j)$ and the inductive hypothesis used to bound 
$ d_{Y_4^\infty} (h_k, b)$ yields 
 \begin{eqnarray*}
 d_{Y_4^\infty} (a, b)  & \le & d_{Y_4^\infty} (a, h_j) + d_{Y_4^\infty} (h_j, h_k) + d_{Y_4^\infty} (h_k, b) \\
 & < & 2(1+\sqrt{2}) x_{h_j} - (y_{h_j}-y ) + (x_{h_k}-x_{h_j}) + (y_{h_j}-y_{h_k}) + 2(1+\sqrt{2})(x-x_{h_k}) + (y_{h_k}-y) \\
 & = & 2(1+\sqrt{2})x + (2(1+\sqrt{2}) -1) x_{h_j} + (1-2(1+\sqrt{2}))x_{h_k} \\
 & = & 2(1+\sqrt{2})x + (1+2\sqrt{2}) x_{h_j} - (1+2\sqrt{2})x_{h_k}\\
 & \le & 2(1+\sqrt{2})x
 \end{eqnarray*}
The last inequality follows immediately from the fact that $x_{h_j} \le x_{h_k}$. Assume now that $\ell_j$ is the inductive point of $S_j$ (so $\ell_j$ lies on the east side of $S_j$). Let $\ell_k$ be the first vertex in the sequence $\ell_j, \ell_{j+1}, \ldots \ell_r$ such that 
\begin{equation*}
x-x_{\ell_k} \ge y - y_{\ell_k} > 0
\label{eq:hjk}
\end{equation*}
Refer to~\autoref{fig:promising2}. Arguments similar to the ones used in the previous case show that 
\[
p_{Y_4^\infty}(\ell_j, \ell_k) < (x_{\ell_k}-x_{\ell_j}) + (y_{\ell_k}-y_{\ell_j})
\]
This along with~\autoref{lem:key} used to bound $d_{Y_4^\infty} (a, \ell_j)$ and the inductive hypothesis used to bound 
$d_{Y_4^\infty} (\ell_k, b)$ yields 
 \begin{eqnarray*}
 d_{Y_4^\infty} (a,b)  & \le & d_{Y_4^\infty} (a, \ell_j) + d_{Y_4^\infty} (\ell_j, \ell_k) + d_{Y_4^\infty} (\ell_k, b) \\
 & < & 2(1+\sqrt{2}) x_{\ell_j} + y_{\ell_j} + (x_{\ell_k}-x_{\ell_j}) + (y_{\ell_k}-y_{\ell_j}) + 2(1+\sqrt{2})(x-x_{\ell_k}) + (y-y_{\ell_k}) \\
 &= & 2(1+\sqrt{2})x + y + (2(1+\sqrt{2}) -1) x_{\ell_j} + (1-2(1+\sqrt{2}))x_{\ell_k} \\
 & = & 2(1+\sqrt{2})x + y + (1+2\sqrt{2}) x_{\ell_j} - (1+2\sqrt{2})x_{\ell_k}\\
 & \le & 2(1+\sqrt{2})x + y
 \end{eqnarray*}
This concludes the proof of~\autoref{thm:key}. 
\end{proof}

\section{$Y_4$ in the $L_2$ Metric}
\label{sec:y4}
In this section we turn to the Yao graph $Y_4$ defined in the Euclidean metric space. 
It has been shown that, corresponding to each edge $(a,b) \in Y_4^\infty$, there is a path in $Y_4$ of length 
$d_{Y_4}(a, b) \le (26+23\sqrt{2})\cdot d_2(a,b)$ (Lemma 9 from~\cite{BDD+12}). Combined with the result of~\autoref{thm:main}, which shows that 
$Y_4^\infty$ is a $\sqrt{13+8\sqrt{2}}$-spanner, this yields a stretch factor of $(26+23\sqrt{2})\sqrt{13+8\sqrt{2}} \lesssim 288.59$ for $Y_4$. This improves upon the best currently known stretch factor of $8 \sqrt{2}(26+23\sqrt{2}) \lesssim 662.16$ for $Y_4$ established in~\cite{BDD+12}. In this section we further reduce the stretch factor of $Y_4$ to $(11+7\sqrt{2})\sqrt{4+2\sqrt{2}} \lesssim 54.62$. 

%
%%%%%%%%%%%%%%%%%%%%%%%%%%%%%%%%%Figure Begin
\begin{figure}[htbp]
\centering
\begin{tabular}{c@{\hspace{0.15\linewidth}}c}
\includegraphics[width=0.2\linewidth]{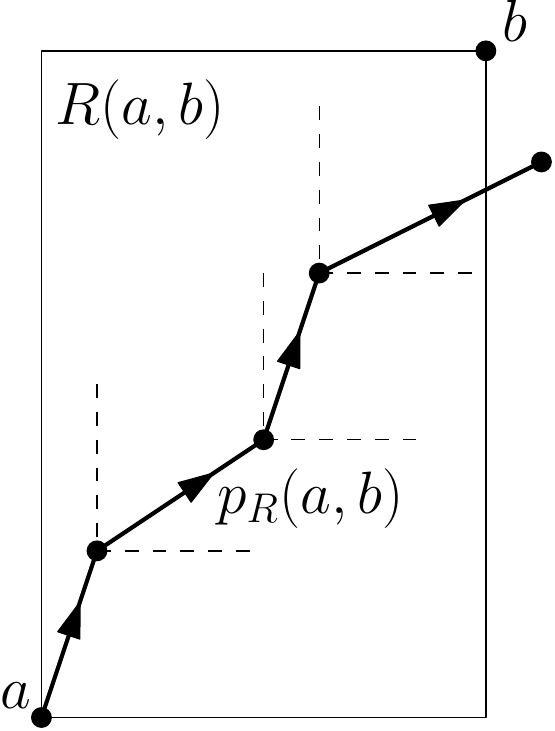} &
\includegraphics[width=0.25\linewidth]{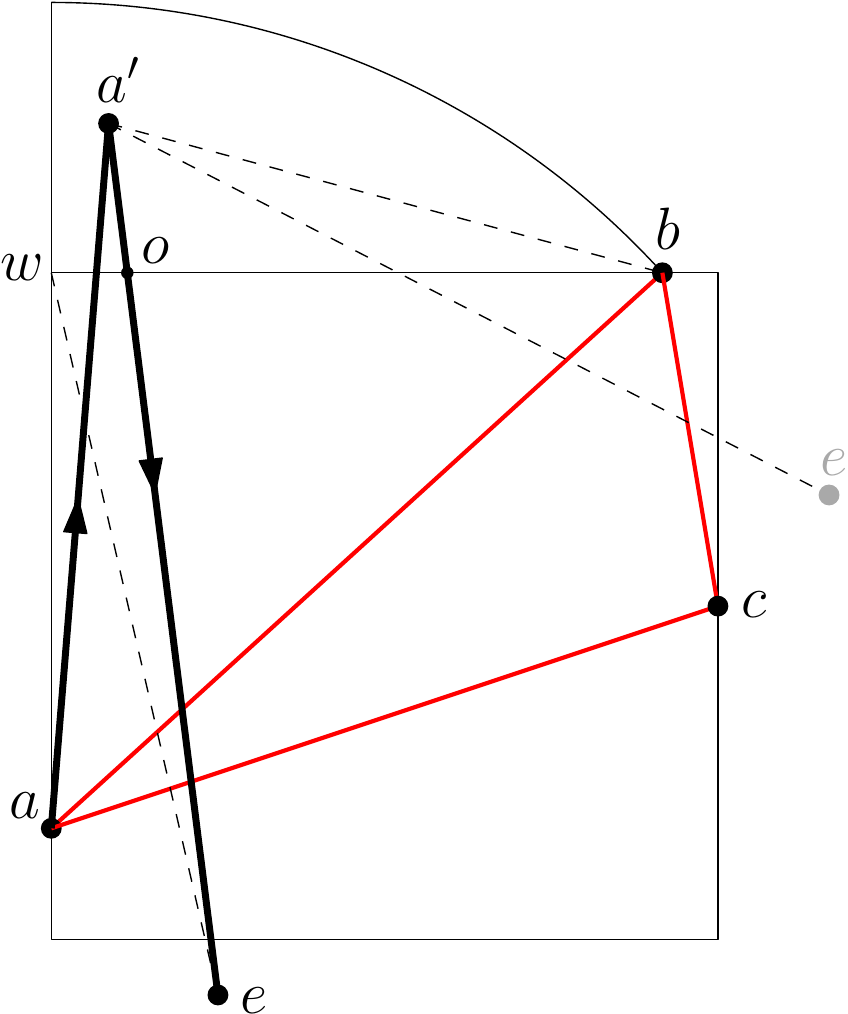} \\
(a) & (b) 
\end{tabular}
\caption{(a) Greedy path $p_R(a, b)$ (b) If $(a',e)$ crosses $(a,b)$, then $\arr{(a',e)} \notin Y_4$.}
\label{fig:y4defs}
\end{figure}
%%%%%%%%%%%%%%%%%%%%%%%%%%%%%%%%%Figure End
%

For ease of presentation, we introduce a few definitions. 
Let $p_R(a, b)$ denote the greedy path that begins at $a$, follows the $Y_4$ edges pointing in the direction of $b$, and ends at the first vertex exterior to, or on the boundary of, $R(a,b)$.~\autoref{fig:y4defs}a illustrates this definition. 
Let $d_R(a, b)$ denote the length of $p_R(a, b)$.  
In our proofs we use the following preliminary results from~\cite{BDD+12}. 

\begin{proposition}[\cite{BDD+12}]
For any triangle $\triangle abc$, $d_2(a, c)^2 < d_2(a, b)^2 + d_2(b, c)^2$, if $\angle {bac} < \pi/2$.
\label{prop:coslaw}
\end{proposition}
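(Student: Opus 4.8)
The plan is to obtain this inequality directly from the law of cosines, the standard device for converting an acuteness hypothesis into a Pythagorean-type comparison of side lengths. Since $\triangle abc$ is a genuine (nondegenerate) triangle, its three side lengths $d_2(a,b)$, $d_2(b,c)$, $d_2(a,c)$ are strictly positive; I will use this to promote the weak form of the inequality to the strict form asserted in the statement.

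The single key step is to expand the left-hand side by the law of cosines applied to the side $(a,c)$:
\[
d_2(a,c)^2 = d_2(a,b)^2 + d_2(b,c)^2 - 2\,d_2(a,b)\,d_2(b,c)\cos\theta ,
\]
where $\theta$ is the interior angle of $\triangle abc$ that subtends the side $(a,c)$. The desired inequality $d_2(a,c)^2 < d_2(a,b)^2 + d_2(b,c)^2$ is thus equivalent to $\cos\theta > 0$, i.e. to $\theta$ being strictly acute, and the plan is to read this acuteness off the hypothesis $\angle bac < \pi/2$. Because the two side lengths multiplying the cosine are strictly positive by nondegeneracy, a strictly positive $\cos\theta$ forces the entire cross-term to be strictly positive, so subtracting it from the right-hand side yields the strict inequality as claimed.

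The step I expect to be the main obstacle — and the one that must be stated with care — is the angle–side bookkeeping inside the law of cosines: one has to identify exactly which interior angle $\theta$ multiplies the cross-term for the side $(a,c)$ and confirm that its acuteness is precisely what the hypothesis $\angle bac < \pi/2$ supplies. This is the exact point at which it is tempting to reach for the cosine of a different interior angle of the triangle, so in a full write-up I would fix the correspondence between the hypothesized angle and the cross-term cosine explicitly before signing that term, and only then invoke nondegeneracy to pass from the weak inequality to the strict one.
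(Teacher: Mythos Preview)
Your instinct to worry about the angle--side bookkeeping is exactly right, and in fact it is fatal here rather than merely a point to be stated with care. The law of cosines applied to the side $(a,c)$ reads
\[
d_2(a,c)^2 = d_2(a,b)^2 + d_2(b,c)^2 - 2\,d_2(a,b)\,d_2(b,c)\cos(\angle abc),
\]
so the angle whose acuteness is needed is $\angle abc$, the angle at $b$, not the hypothesised $\angle bac$ at $a$. These are different angles, and no amount of careful bookkeeping will turn one into the other. Concretely, take $a=(0,0)$, $b=(10,0)$, $c=(10.1,1)$: then $\angle bac$ is about $6^\circ$, yet $d_2(a,c)^2 = 103.01 > 101.01 = d_2(a,b)^2 + d_2(b,c)^2$, so the statement as literally printed is false.

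The paper itself gives no proof of this proposition; it is quoted from~\cite{BDD+12}, and the way it is invoked later in the paper (to bound $d_2(a',b)^2$ by $d_2(a,a')^2+d_2(a,b)^2$ using that $a'$ and $b$ lie in the same quadrant of $a$, hence the angle at $a$ is acute) makes clear that the intended hypothesis is $\angle abc < \pi/2$, the angle at the vertex shared by the two sides on the right-hand side. With that corrected hypothesis your law-of-cosines argument goes through immediately: $\cos(\angle abc)>0$ and the two adjacent side lengths are positive, so the cross term is strictly positive and the strict inequality follows. In short, your plan is the right one, but it proves the corrected statement, not the one printed; you should flag the typo rather than try to force the correspondence.
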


\begin{lemma}[\cite{BDD+12}]
$d_R(a, b) \le d_2(a,b)\sqrt{2}$, and each edge on $p_R(a, b)$ is no longer than $(a,b)$. 
\label{lem:R}
\end{lemma}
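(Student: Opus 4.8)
\textbf{Proof proposal for~\autoref{lem:R}.}

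The plan is to analyze the greedy path $p_R(a,b) = a = v_0, v_1, v_2, \ldots, v_m$ edge by edge, where at each step $v_{i+1}$ is the endpoint of the $Y_4$ edge chosen from $v_i$ in the cone $Q_i(v_i) := Q(v_i, b)$ pointing toward $b$, and $v_m$ is the first vertex on or outside $\partial R(a,b)$. Without loss of generality I would place $a$ at the origin and $b$ in the first quadrant, so that every intermediate vertex $v_i$ (for $i < m$) lies strictly inside $R(a,b)$, and the cone $Q(v_i,b)$ is always the first quadrant translated to $v_i$. The key structural observation is that each step moves monotonically toward $b$ in both coordinates: since $b \in Q(v_i,b)$ and $v_{i+1}$ is a shortest edge in that cone, we have $v_{i+1} \in Q(v_i,b)$, hence $x_{v_i} \le x_{v_{i+1}}$ and $y_{v_i} \le y_{v_{i+1}}$. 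This gives the first half of the lemma almost for free: the total horizontal displacement $\sum_i d_x(v_i,v_{i+1})$ telescopes to exactly $x_{v_m} - x_a$, and similarly for the vertical displacement; combined with $d_2 \le d_x + d_y$ on each edge and the fact that the path stops as soon as it leaves $R(a,b)$, one bounds $d_R(a,b) \le (x_{v_m}-x_a) + (y_{v_m}-y_a)$. The geometry of "first exit" needs to be pinned down: I would argue that the last edge $(v_{m-1},v_m)$ cannot overshoot $R(a,b)$ by too much, because $v_m$ being the \emph{shortest} edge in $v_{m-1}$'s cone means $d_2(v_{m-1},v_m) \le d_2(v_{m-1},b)$, and since $v_{m-1}$ is inside $R(a,b)$ this keeps $v_m$ within the $L_\infty$-ball of radius $d_\infty(v_{m-1},b)$ around $v_{m-1}$, so the excursion outside $R(a,b)$ stays inside the $2\times$ dilation; making this precise yields $x_{v_m} - x_a \le x_b - x_a$ and $y_{v_m}-y_a \le y_b - y_a$ being replaced by appropriate bounds giving $d_R(a,b) \le \sqrt 2\, d_2(a,b)$.

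For the "each edge is no longer than $(a,b)$" claim, I would use~\autoref{fig:y4defs}b-style reasoning to show that the path cannot double back across the segment $ab$ in a way that produces a long edge, but the cleaner route is inductive on the edge index: the first edge $(a,v_1)$ is the shortest edge from $a$ in the cone containing $b$, hence $d_2(a,v_1) \le d_2(a,b)$ directly. For a later edge $(v_i,v_{i+1})$, one shows $d_2(v_i,v_{i+1}) \le d_2(v_i,b) \le d_2(a,b)$, where the first inequality is again the Yao shortest-edge property (since $b$ remains in $v_i$'s forward cone as long as $v_i \in R(a,b)$), and the second inequality is the monotonicity: $v_i$ is "between" $a$ and $b$ in the product order, so $R(v_i,b) \subseteq R(a,b)$ and therefore $d_2(v_i,b) \le d_2(a,b)$ (the diagonal of the smaller rectangle is no longer than the diagonal of the larger). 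This handles all interior edges; the terminal edge $(v_{m-1},v_m)$ is covered by the same argument since $v_{m-1}$ is still strictly inside $R(a,b)$.

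The main obstacle I anticipate is the careful treatment of the \emph{terminal} step and the precise constant $\sqrt 2$: one must rule out the path leaving $R(a,b)$ prematurely on the "wrong" side or wandering in a way that inflates the total length beyond $\sqrt 2\, d_2(a,b)$. The worst case for the length bound is a path that runs along one side of $R(a,b)$ and then along the other, accumulating total length close to $x_b + y_b$, which is at most $\sqrt 2 \sqrt{x_b^2+y_b^2} = \sqrt 2\, d_2(a,b)$ by Cauchy--Schwarz; I would show this is extremal by proving the total $L_1$-length of the path is at most $d_x(a,v_m) + d_y(a,v_m)$ and then bounding $d_x(a,v_m)$ and $d_y(a,v_m)$ each by the corresponding coordinate span of $R(a,b)$ plus a controlled overshoot that does not affect the bound, using that the final vertex $v_m$ is reached by an edge no longer than $d_2(v_{m-1},b)$. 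Since this is a result quoted from~\cite{BDD+12}, I would keep the argument terse and lean on the monotonicity and shortest-edge properties as the two workhorses.
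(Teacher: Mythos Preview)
The paper does not prove this lemma at all: it is stated as a preliminary result imported verbatim from~\cite{BDD+12}, so there is no ``paper's own proof'' to compare against. Your proposal is therefore not a reconstruction of anything in the present paper but an independent proof sketch of a cited fact.

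On the merits, your argument is essentially the standard one and is correct once the terminal-step issue you flag is handled properly. The clean way to close the gap you identified is \emph{not} to bound the coordinates of $v_m$ (which can indeed overshoot $R(a,b)$), but to bound the last edge directly: since $b\in Q(v_{m-1},b)$, the Yao property gives $d_2(v_{m-1},v_m)\le d_2(v_{m-1},b)\le d_x(v_{m-1},b)+d_y(v_{m-1},b)$. Combining this with the telescoped $L_1$ bound $\sum_{i<m-1} d_2(v_i,v_{i+1})\le x_{v_{m-1}}+y_{v_{m-1}}$ for the monotone prefix (all of whose vertices lie in $R(a,b)$) yields
\[
d_R(a,b)\;\le\; x_{v_{m-1}}+y_{v_{m-1}}+(x_b-x_{v_{m-1}})+(y_b-y_{v_{m-1}})\;=\;x_b+y_b\;\le\;\sqrt{2}\,d_2(a,b),
\]
with the last step by Cauchy--Schwarz exactly as you say. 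Your proof of the per-edge bound $d_2(v_i,v_{i+1})\le d_2(v_i,b)\le d_2(a,b)$ via nested rectangles is correct as written.
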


\begin{lemma}
Let $\arr{(a,b)}$  and $\arr{(c,d)}$ be two edges in $Y_4$ that intersect. % (i.e., they share an interior point or an endvertex). 
If $\arr{(a,b)}$ and $\arr{(c,d)}$ share an interior point, let $(x, y)$ be a shortest side of the quadrilateral with vertices $a$, $b$, $c$ and $d$; otherwise, let $x = y$ be the common endpoint. 
In either case, 
\[
d_{Y_4}(x,y) \le 3(2+\sqrt{2}) \cdot \max\{d_2(a,b), d_2(c,d)\}.
\]
\label{lem:y4cross}
\end{lemma}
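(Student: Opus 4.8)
\textbf{Proof plan for Lemma~\ref{lem:y4cross}.}

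The plan is to handle the two cases separately, starting with the easier one where the edges $\arr{(a,b)}$ and $\arr{(c,d)}$ share an endpoint, and then reducing the crossing case to it. In the shared-endpoint case we have $x=y$, so $d_{Y_4}(x,y)=0$ and the claim is trivial; the real work is the case where the two directed edges cross at an interior point. So suppose $\arr{(a,b)}$ and $\arr{(c,d)}$ cross, and let $(x,y)$ be a shortest side of the quadrilateral $Q$ with corners $a,b,c,d$ (taken in convex position order, since two crossing segments $ab$ and $cd$ have endpoints forming a convex quadrilateral with $ab$ and $cd$ as its diagonals). The key structural observation I would exploit is the one foreshadowed in Figure~\ref{fig:y4defs}b: because $\arr{(a,b)}$ is a Yao edge, $b$ is the closest vertex to $a$ in the cone $Q(a,b)$, and likewise $d$ is closest to $c$ in $Q(c,d)$; crossing forces strong constraints relating the four pairwise distances. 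In particular, I expect that the shortest side $(x,y)$ of $Q$ satisfies $d_2(x,y) \le \max\{d_2(a,b), d_2(c,d)\}$ — indeed a shortest side of a quadrilateral whose diagonals are $ab$ and $cd$ cannot exceed the longer diagonal — and moreover one of $x,y$ is an endpoint of the "short" diagonal lying in the same cone as the other.

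The main step is then to build an explicit $Y_4$-path from $x$ to $y$. Here I would use the greedy path machinery: walk along $p_R(x,y)$, i.e. follow $Y_4$ edges toward $y$ until leaving $R(x,y)$, reaching some vertex $x'$ with $d_R(x,y) = d_{Y_4}(x, x') \le \sqrt{2}\, d_2(x,y)$ by Lemma~\ref{lem:R}, and each edge on it of length at most $d_2(x,y)$. If $x' = y$ we are done with a factor of $\sqrt2$. Otherwise $x'$ lies on or outside $\partial R(x,y)$, and I would argue, again via the crossing/Yao constraints (cf.\ Figure~\ref{fig:y4defs}b, where an edge $\arr{(a',e)}$ crossing $ab$ is shown to be impossible in $Y_4$), that the last edge of $p_R(x,y)$ cannot cross certain segments, which pins $x'$ into a bounded region near $y$. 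Then I would recurse or directly bound $d_{Y_4}(x', y)$ in terms of $d_2(x,y)$, possibly invoking Proposition~\ref{prop:coslaw} to control the distance $d_2(x', y)$ via a right-or-acute-angle argument. Summing the greedy portion and the residual portion should give the constant $3(2+\sqrt{2})$.

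The main obstacle I anticipate is the case analysis needed to show the greedy path $p_R(x,y)$ makes real progress and that the leftover distance $d_2(x',y)$ — or the number of greedy segments needed — stays bounded by a small constant multiple of $d_2(x,y)$. Concretely, one must rule out the situation where following Yao edges toward $y$ marches the path off in a direction that exits $R(x,y)$ far from $y$; the crossing hypothesis on $\arr{(a,b)}$ and $\arr{(c,d)}$ is exactly what should prevent this, but turning that geometric intuition into the clean inequality $d_{Y_4}(x,y) \le 3(2+\sqrt2)\max\{d_2(a,b),d_2(c,d)\}$ will require carefully tracking which cone each relevant vertex sits in and repeatedly applying the "closest vertex in the cone" defining property of $Y_4$. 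I would organize this by first proving $d_2(x,y)$ is comparable to (at most) the longer of the two edges, then showing $d_{Y_4}(x,y) \le 3(2+\sqrt2)\, d_2(x,y)$ as a standalone statement about a single short segment whose endpoints are Yao-constrained, so the factor bookkeeping is isolated from the geometry of the crossing.
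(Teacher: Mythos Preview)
The paper's proof is a two--line citation: it invokes Lemma~4 of~\cite{BDD+12} to get
\[
d_2(x,y)\le \max\{d_2(a,b),d_2(c,d)\}/\sqrt{2},
\]
then Lemma~8 of~\cite{BDD+12} to get $d_{Y_4}(x,y)\le \tfrac{6}{\sqrt{2}-1}\,d_2(x,y)=6(1+\sqrt{2})\,d_2(x,y)$, and multiplies. Your plan attempts to rederive this from scratch, which is legitimate, but the constants in your decomposition do not close.

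You bound the shortest side by $d_2(x,y)\le \max\{d_2(a,b),d_2(c,d)\}$, the trivial ``side $\le$ diagonal'' inequality, thereby dropping the factor $1/\sqrt{2}$ that~\cite{BDD+12} extracts precisely from the Yao closest-in-cone property of the two crossing edges. To compensate, you then propose to prove $d_{Y_4}(x,y)\le 3(2+\sqrt{2})\,d_2(x,y)\approx 10.24\,d_2(x,y)$ ``as a standalone statement.'' That is strictly \emph{stronger} than the $6(1+\sqrt{2})\approx 14.49$ bound Lemma~8 of~\cite{BDD+12} actually achieves for such a pair, and your greedy-path sketch gives no mechanism for the extra $\sqrt{2}$ savings: one application of Lemma~\ref{lem:R} already spends $\sqrt{2}\,d_2(x,y)$ to reach a point $x'$ outside $R(x,y)$, and bounding $d_{Y_4}(x',y)$ afterward is exactly the recursive difficulty. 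The appeal to the configuration in Figure~\ref{fig:y4defs}b does not help here either---that argument relies on an empty $Del^\infty$ circumsquare, a hypothesis you do not have for the side $(x,y)$.

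So the gap is concrete: either recover the $1/\sqrt{2}$ in the side-length estimate (this is where the crossing hypothesis on two $Y_4$ edges is genuinely used) and pair it with the known $6(1+\sqrt{2})$ path bound, or else explain how your greedy construction beats $6(1+\sqrt{2})$. As written, your split asks for the weak side bound together with a path bound stronger than anything established, and the product does not reach $3(2+\sqrt{2})$.
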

\begin{proof}
This result follows immediately from two intermediate results established in~\cite{BDD+12}. 
If $x = y$, then $d_{Y_4}(x,y) = 0$ and the lemma clearly holds. Otherwise, 
Lemma 4 from~\cite{BDD+12} shows that 
\[d_2(x, y) \le \max\{d_2(a,b), d_2(c,d)\}/\sqrt{2}\]
Lemma 8 from~\cite{BDD+12} shows that $d_{Y_4}(x, y) \le \frac{6}{\sqrt{2}-1}\cdot d_2(x, y)$. 
These together yield the inequality stated by the lemma. 
\end{proof}

\noindent
We need one more lemma before we turn to the main result of this section. 
\begin{lemma}
Let $\triangle abc \in Del^\infty$ and let  $S$ be its circumsquare. Assume that $(a, b) \notin Y_4$ and $a, b$ lie on adjacent sides that meet at corner $w$ of  $S$. Let $\arr{(a,a')} \in Y_4$ and $\arr{(a',e)} \in Y_4$ be such that $a' \in Q(a, b)$ and $e \in Q(a', b)$. If $(a,a')$ crosses the line segment $(w,b)$, then $(a',e)$ may not cross $(a,b)$. 
\label{lem:y4zigzag}
\end{lemma}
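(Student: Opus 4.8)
The plan is to argue by contradiction. Assuming $(a',e)$ crosses $(a,b)$ at a point $p$ that is interior to both segments, I will show that the vertex $e$ is forced to lie in the interior of the circumsquare $S$, which is impossible since $S$, being the circumsquare of a triangle of $Del^\infty$, contains no vertex of $V$.

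First I would normalize coordinates. Up to a rotation and a reflection --- the reflection across the diagonal through $w$ lets me choose which of $a,b$ lies on which of the two sides meeting at $w$ --- assume $S=[0,s]^2$, $w=(0,0)$, $a=(a_1,0)$ on the south side and $b=(0,b_2)$ on the west side, with $0<a_1$ and $0<b_2\le s$. Then $Q(a,b)=Q_2(a)$, so $\arr{(a,a')}$ is the $Y_4$-edge out of $a$ in $Q_2(a)$ and hence $d_2(a,a')\le d_2(a,b)$ because $b\in Q_2(a)$. Since $(a,a')$ meets the west side of $S$ at an interior point of the segment, $x_{a'}=-\xi$ with $\xi>0$, and the half-open cone convention forces $y_{a'}>0$. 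Squaring $d_2(a,a')\le d_2(a,b)$ and cancelling $a_1^2$ gives $2a_1\xi+\xi^2+y_{a'}^2\le b_2^2$, so $y_{a'}<b_2$; set $\delta=b_2-y_{a'}\in(0,b_2)$. Relative to $a'$ the vertex $b$ is strictly up-and-to-the-right, so $b\in Q_1(a')$; since $e\in Q(a',b)=Q_1(a')$ and $\arr{(a',e)}\in Y_4$, we get $d_2(a',e)\le d_2(a',b)=\sqrt{\xi^2+\delta^2}$.

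Now suppose $(a',e)$ crosses $(a,b)$ at $p$. Because $Q_1(a')$ is convex and contains $e$, the segment $(a',e)$ lies in the closed quadrant, so $p$ lies on the part of $(a,b)$ inside $Q_1(a')$, namely the subsegment $[m,b]$ where $m$ is the point of $(a,b)$ at height $y_{a'}$. As $p$ runs from $m$ to $b$ the direction $\vec{a'p}$ turns monotonically from horizontal up to the direction of $\vec{a'b}$, so its angle $\theta$ above the horizontal satisfies $0\le\theta\le\arctan(\delta/\xi)$; moreover $e$ lies on the ray from $a'$ through $p$ beyond $p$, so $\vec{a'e}$ has this same angle $\theta$ and $d_2(a',e)>d_2(a',p)$. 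From $y_e=y_{a'}+d_2(a',e)\sin\theta$, together with $\sin\theta\le\delta/\sqrt{\xi^2+\delta^2}$ and $d_2(a',e)\le\sqrt{\xi^2+\delta^2}$, I get $y_e\le y_{a'}+\delta=b_2$; equality would force $\vec{a'e}$ to point exactly at $b$ with $d_2(a',e)=d_2(a',b)$, i.e. $e=b$, which is impossible since $(a',b)$ and $(a,b)$ share the endpoint $b$ and hence cannot cross. So $0<y_{a'}\le y_e<b_2\le s$. For the first coordinate, $d_2(a',p)\cos\theta=x_p-x_{a'}=x_p+\xi\ge\xi$ since $x_p\ge0$ on $[m,b]$, hence $d_2(a',e)\cos\theta>\xi$ and $x_e=-\xi+d_2(a',e)\cos\theta>0$; and if $x_e\ge s$ then $d_2(a',e)\ge x_e+\xi\ge s+\xi>\sqrt{\xi^2+\delta^2}$ (using $\delta<s$), contradicting $d_2(a',e)\le\sqrt{\xi^2+\delta^2}$, so $x_e<s$. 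Thus $e\in(0,s)\times(0,s)$ lies in the interior of $S$ --- contradicting the emptiness of the circumsquare. Hence $(a',e)$ does not cross $(a,b)$.

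I expect the crux to be the inequality $y_e\le b_2$: it is the wedge estimate $d_2(a',e)\sin\theta\le\sqrt{\xi^2+\delta^2}\cdot\delta/\sqrt{\xi^2+\delta^2}=\delta$, coming from the fact that $e$ sits in the cone between the horizontal ray $\vec{a'm}$ and the ray $\vec{a'b}$ at distance at most $d_2(a',b)$, together with the observation that equality pins $e$ to $b$, whereupon the crossing hypothesis fails. The remaining estimates ($x_e\in(0,s)$, the reduction of $p$ to $[m,b]$, the normalization) are routine, but one must be slightly careful with the half-open cone conventions --- to obtain $y_{a'}>0$ and $b\in Q_1(a')$ cleanly --- and with the precise meaning of "cross", i.e. sharing a point interior to both segments, which is exactly what renders the degenerate configurations ($a$ or $b$ at a corner of $S$, or $e$ on the line $ab$) harmless.
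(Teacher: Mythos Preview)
Your argument is correct and takes a route genuinely different from the paper's. Both proofs argue by contradiction from the crossing hypothesis, but they exchange the roles of the two key constraints. The paper first invokes the emptiness of $S$ to place $e$ \emph{outside} $S$, so that $d_2(w,e)$ exceeds the side length of $S$ and hence $d_2(w,e) > d_2(w,b)$; it then lets $o$ be the point where $(a',e)$ meets the segment $(w,b)$ and sums the triangle inequalities for $\triangle woe$ and $\triangle a'ob$ to obtain $d_2(w,e)+d_2(a',b) < d_2(w,b)+d_2(a',e)$, forcing $d_2(a',b) < d_2(a',e)$ --- a contradiction to $\arr{(a',e)} \in Y_4$. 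You instead start from the $Y_4$ bound $d_2(a',e)\le d_2(a',b)$ and use a direct coordinate estimate (the wedge bound $d_2(a',e)\sin\theta\le\delta$ together with $x_e\in(0,s)$) to trap $e$ inside the open square $S$, contradicting emptiness. The two arguments are essentially contrapositives of one another. The paper's version is shorter and coordinate-free, exploiting a clean crossing-diagonals inequality; yours is more explicit and makes the half-open-cone and degeneracy issues (e.g.\ $y_{a'}>0$, $p\ne b$) fully visible.
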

\begin{proof}
Assume to the contrary that $(a,a')$ crosses $(w,b)$ and $(a',e)$ crosses $(a,b)$. Refer to~\autoref{fig:y4defs}b. 
By definition $S$ is empty of vertices, therefore both $a'$ and $e$ lie outside of $S$.  
%Observe first that $(a',e)$ must cross the side of $S$ opposite to $(u, b)$; otherwise, $\angle{a'be}$ would be 
%obtuse and $d_2(a', e) > d_2(a', b)$, contradicting the fact that $\arr{(a', e)} \in Y_4$. 
It follows that $(w,e)$ is longer than the side length of $S$, so the inequality $d_2(w, e) > d_2(w, b)$ holds. 
Let $o$ be the intersection point between $(w, b)$ and $(a', e)$. Summing up the triangle inequalities for  
$\triangle woe$ and $\triangle a'ob$ yields $d_2(w, e) + d_2(a', b) < d_2(w, b) + d_2(a', e)$. This along with $d_2(w, e) > d_2(w, b)$ yields $d_2(a', b) < d_2(a', e)$, contradicting the fact that $\arr{(a',e)} \in Y_4$. It follows that $(a', e)$ may not cross $(a, b)$ and the lemma holds. 
\end{proof}

We are now ready to establish the main result of this section, showing that there is a short path in $Y_4$ between the endpoints of 
each edge in $Del^\infty$. 
\begin{theorem}
For each edge $(a, b) \in Del^\infty$, $d_{Y_4}(a, b) \le (11 + 7\sqrt{2}) \cdot d_2(a,b)$.
\label{thm:y4main}
\end{theorem}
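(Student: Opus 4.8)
The plan is to bound $d_{Y_4}(a,b)$ for an arbitrary Delaunay edge $(a,b) \in Del^\infty$ by walking a greedy $Y_4$ path from $a$ toward $b$ inside the rectangle $R(a,b)$ and arguing that, once the path leaves $R(a,b)$, the exit vertex is close enough to $b$ (in $Del^\infty$ or $L_\infty$ terms) to invoke \autoref{lem:bonus}-style bounds together with the crossing lemmas. First I would set up the situation: let $S$ be the circumsquare of a Delaunay triangle $\triangle abw$ with side $(a,b)$ (using \autoref{lem:quad}, if $(a,b) \notin Y_4^\infty$ then $a,b$ lie on opposite sides of $S$; the interesting subcase for \autoref{lem:y4zigzag} is when $a,b$ lie on adjacent sides meeting at a corner $w$). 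If $(a,b)$ happens to be in $Y_4$ itself, the bound is trivial, so assume it is not. Then I would follow the greedy path $p_R(a,b)$ from $a$: by \autoref{lem:R} this path has length at most $\sqrt{2}\,d_2(a,b)$ and each of its edges is no longer than $(a,b)$; let $a'$ be the vertex where it first reaches or leaves $\partial R(a,b)$.

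The key steps, in order: (1) Show the first greedy edge $\arr{(a,a')}$ stays controlled — either $a' = b$ (done), or $a'$ lies inside/on $R(a,b)$ with $d_2(a,a') \le d_2(a,b)$, and track whether $\arr{(a,a')}$ crosses the segment $(w,b)$. (2) Apply \autoref{lem:y4zigzag}: if the greedy edge out of $a$ crosses $(w,b)$, then the next greedy edge $\arr{(a',e)}$ toward $b$ cannot cross $(a,b)$, which keeps the greedy path from "zig-zagging" back and forth across $ab$ and forces it to make progress toward $b$ while staying on one side. (3) Continue the greedy path until it exits $R(a,b)$ at some vertex $c$; bound the total length of this portion by $\sqrt{2}\,d_2(a,b)$ via \autoref{lem:R}. (4) Now $c$ is outside or on $\partial R(a,b)$, and I would argue $(c, b)$ (or an edge of $Del^\infty$ near $c$) can be handled by a crossing argument: some $Y_4$ edge incident to the greedy path crosses some $Y_4$ edge incident to $b$, so \autoref{lem:y4cross} applies, giving a detour of length at most $3(2+\sqrt2)\max\{\dots\}$ where the max is bounded by a constant times $d_2(a,b)$ (since all the edges involved have been shown to be $O(d_2(a,b))$). (5) Assemble: $d_{Y_4}(a,b) \le (\text{length of greedy path from } a) + d_{Y_4}(c,b)$, and recursively bound $d_{Y_4}(c,b)$ — this should be an induction on the number of Delaunay triangles crossed by $ab$, or on $d_2(a,b)$ among pairs in $V$, with the constant $11+7\sqrt2$ emerging from solving the resulting recurrence $C \ge \sqrt 2 + 3(2+\sqrt2)\kappa + (\text{residual term})$ for the appropriate geometric constants $\kappa$.

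I expect the main obstacle to be step (4): precisely identifying which $Y_4$ edge near the exit vertex $c$ crosses which $Y_4$ edge incident to $b$ so that \autoref{lem:y4cross} can legitimately be invoked, and doing the case analysis on the side of $S$ (adjacent-corner case vs.\ opposite-side case) on which $a$, $b$, and $c$ fall. The crossing lemma requires two \emph{directed} $Y_4$ edges that actually intersect, so I would need to trace the greedy path from $b$'s side as well and show the two greedy paths (from $a$ toward $b$, and a short stub from $b$) must meet or cross inside the region bounded by $S$ and the segments $(a,b)$, $(w,b)$; the emptiness of $S$ (it contains no vertices of $V$) and \autoref{lem:y4zigzag} are the tools that pin this down. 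A secondary nuisance will be bookkeeping the accumulated constants so that the final sum collapses to exactly $11+7\sqrt2$ rather than something larger; I would keep the greedy-path contribution ($\sqrt2$), the single crossing detour ($3(2+\sqrt2) = 6+3\sqrt2$), and possibly one more \autoref{lem:bonus}-type term ($1+\sqrt2$, after converting $L_\infty$ to $L_2$), and verify $\sqrt2 + (6+3\sqrt2) + \dots$ together with the recursive factor lands on the claimed value.
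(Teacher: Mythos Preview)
Your two-sided crossing idea in step (4) is on the right track, but the recursive step (5) is a genuine gap that would break the argument. The theorem is stated only for $(a,b)\in Del^\infty$; once your greedy walk exits $R(a,b)$ at some vertex $c$, the pair $(c,b)$ is in general \emph{not} a Delaunay edge, so you cannot invoke the induction hypothesis on it. Strengthening the statement to all pairs is not an option either, because the whole setup (empty circumsquare $S$, \autoref{lem:y4zigzag}) hinges on the existence of a Delaunay triangle with side $(a,b)$. The paper's proof is accordingly \emph{not} inductive: it is a direct construction, and the constant $11+7\sqrt{2}$ is obtained as a direct sum of bounds, not by solving a recurrence.

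The key idea you are missing is the role of the third Delaunay vertex. In the hardest case ($a$ and $b$ on opposite sides of $S$), the paper takes one $Y_4$ step $\arr{(a,a')}$ and one $Y_4$ step $\arr{(b,b')}$, and then launches greedy paths $p_R(a',c)$ and $p_R(b',c)$ \emph{toward the third vertex $c$ of $\triangle abc$}, not toward each other. \autoref{lem:y4zigzag} is applied to the Delaunay sides $(a,c)$ and $(b,c)$ to show these paths cannot cross them, which forces $p_R(a',c)$ to end above-right of $c$ and $p_R(b',c)$ to end below-right of $c$; one more greedy stub $p_R(e,b')$ then guarantees an intersection, and a single application of \autoref{lem:y4cross} finishes. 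The constant is
\[
\underbrace{1}_{d_2(a,a')} + \underbrace{\sqrt{2}}_{d_2(b,b')} + \underbrace{4}_{d_R(a',c)+d_R(e,b')+d_R(b',c)} + \underbrace{6(1+\sqrt{2})}_{d_{Y_4}(x,y)} = 11+7\sqrt{2}.
\]
The adjacent-sides case is simpler (paths aimed at $b$ and $a'$ directly, yielding $11+6\sqrt{2}$ or $8+3\sqrt{2}$). So: drop the recursion, aim the greedy paths at the third Delaunay vertex, and sum.
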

\begin{proof}
If $(a, b) \in Y_4$, then $d_{Y_4}(a, b) = d_2(a, b)$ and the theorem holds. So assume that 
$(a, b) \notin Y_4$, and let $\arr{(a, a')} \in Y_4$, with $a' \in Q(a, b)$. By definition, 
\begin{equation}
d_2(a, a') \le d_2(a, b) 
\label{eq:aa'}
\end{equation}
This along with~\autoref{prop:coslaw} implies that $d_2(a', b)^2 < d_2(a,a')^2 + d_2(a, b)^2 \le 2\cdot d_2(a, b)^2$, so
\begin{equation}
d_2(a', b) < \sqrt{2} \cdot d_2(a, b)
\label{eq:a'b}
\end{equation}
Since $(a, b) \in Del^\infty$, there is a triangle $\triangle abc \in Del^\infty$ whose circumsquare $S$ contains 
no vertices in its interior.  
We discuss two cases, depending on whether $a$ and $b$ lie on adjacent sides or on opposite sides of $S$. 
%
%%%%%%%%%%%%%%%%%%%%%%%%%%%%%%%%%Figure Begin
\begin{figure}[htbp]
\centering
\begin{tabular}{ccc}
\includegraphics[width=0.27\linewidth]{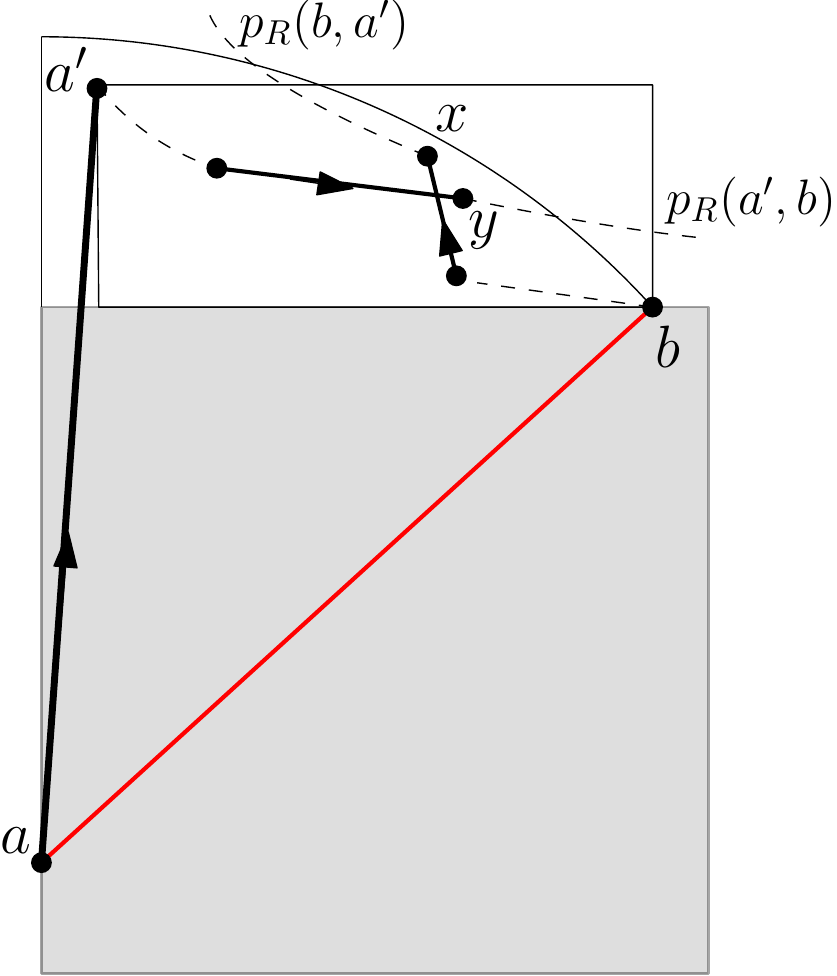} & 
\includegraphics[width=0.32\linewidth]{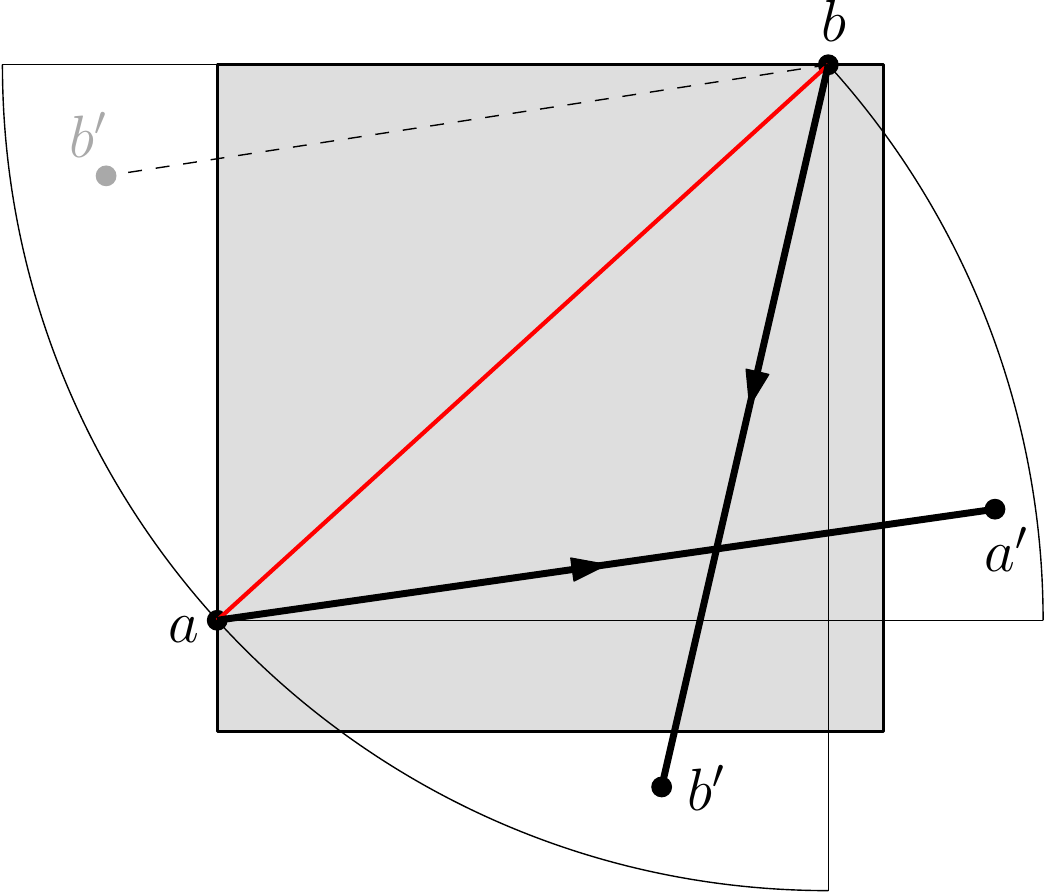} &
\includegraphics[width=0.3\linewidth]{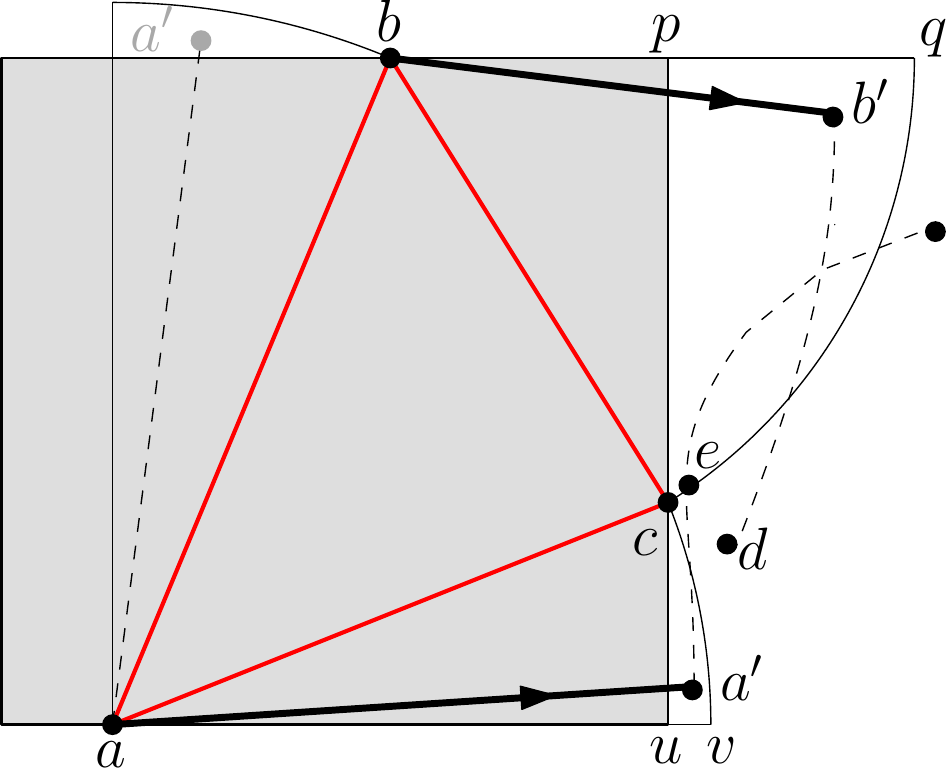} \\
(a) & (b) & (c)
\end{tabular}
\caption{\autoref{thm:y4main}: $(a, b)$ spans adjacent sides of $S$, and (a) $(a, a')$ lies counterclockwise from $(a, b)$, or (b) $(a, a')$ lies clockwise from $(a, b)$; (c) $(a, b)$ spans opposite sides of $S$.}
\label{fig:y4main}
\end{figure}
%%%%%%%%%%%%%%%%%%%%%%%%%%%%%%%%%Figure End
%
\paragraph{Case 1.} Consider first the simpler case when $a$ and $b$ lie on adjacent sides of $S$. Assume without loss of generality that 
$a$ and $b$ lie on the west and north sides of $S$ respectively, so $b \in Q_1(a)$. 

Assume first that $(a,a')$ lies counterclockwise from $(a, b)$. Since $S$ is empty of vertices, $a'$ must be above $b$. Refer to~\autoref{fig:y4main}(a).
Inequality~(\ref{eq:a'b}) together with~\autoref{lem:R} implies  
\begin{equation*}
d_R(a', b) \le \sqrt{2} \cdot d_2(a', b) < 2 \cdot d_2(a, b) , 
\label{eq:dR}
\end{equation*}
and similarly for $d_R(b, a')$. By~\autoref{lem:y4zigzag}, 
$p_R(a', b)$ may not cross $(a, b)$, therefore $p_R(a', b)$ exits $R(a', b)$ through its right side. This implies that 
the paths $p_a = (a,a') \oplus p_R(a', b)$ and $p_b = p_R(b, a')$ intersect. If $p_a$ and $p_b$ share a vertex,
define $x = y$ to be the common vertex; 
otherwise, let $(x, y)$ be a shortest side of the quadrilateral formed by the endpoints of the two crossing edges. 
\autoref{lem:R} tells us that the two crossing edges are no longer than $\max\{d_2(a, a'), d_2(a', b)\}$, and by inequalities~(\ref{eq:aa'}) and~(\ref{eq:a'b}) this quantity is no greater than $\sqrt{2} \cdot d_2(a, b)$. 
This along with~\autoref{lem:y4cross} implies that 
$d_{Y_4} (x,y) \le 3(2+\sqrt{2})\sqrt{2}\cdot d_2(a, b) = 6(1+\sqrt{2})\cdot d_2(a, b)$. 
These together show that  
 \begin{eqnarray*}
 d_{Y_4} (a,b)  & \le & d_2(a,a') + d_R(a',b) + d_R(b, a') + d_{Y_4} (x, y)  \\
 & \le & d_2(a, b) + 2 \cdot d_2(a, b) + 2 \cdot d_2(a, b) +  6(1+\sqrt{2})\cdot d_2(a, b) \\
 & = & (11 + 6\sqrt{2}) \cdot d_2(a, b)
 \end{eqnarray*}
Thus the theorem holds for this case. 

Assume now that $(a,a')$ lies clockwise from $(a, b)$. Refer to~\autoref{fig:y4main}(b). 
Let $\arr{(b, b')} \in Y_4$, with $b' \in Q(b, a)$. By definition, 
\begin{equation}
d_2(b, b') \le d_2(a, b)
\label{eq:bb'} 
\end{equation}
If $(b,b')$ lies clockwise from $(b, a)$, we find ourselves in a situation similar to the one depicted in~\autoref{fig:y4main}(a), with $a$ and $b$ switching roles. An analysis similar to the one above shows that the theorem holds for this case. So assume that $(b, b')$ lies counterclockwise from $(b,a)$, as depicted in~\autoref{fig:y4main}(b). In this case $(b, b')$ and $(a, a')$ cross in an interior point. Let $(x, y)$ be a shortest side of the quadrilateral with vertices $a$, $b'$, $a'$ and $b$.~\autoref{lem:y4cross}, along with inequalities (\ref{eq:aa'}) and (\ref{eq:bb'}), implies that $d_{Y_4} (x,y) \le 3(2+\sqrt{2})\cdot d_2(a, b)$. Thus we have that 
\begin{eqnarray*}
 d_{Y_4} (a,b)  & \le & d_2(a,a') + d_2(b, b') + d_{Y_4} (x, y)  \\
 & \le & d_2(a, b) + d_2(a, b) + 3(2+\sqrt{2})\cdot d_2(a, b) \\
 & = & (8 + 3\sqrt{2}) \cdot d_2(a, b)
 \end{eqnarray*}
So the theorem holds for this case as well. 

\paragraph{Case 2.}  Consider now the case where $a$ and $b$ lie on opposite sides of $S$. Recall that $(a, b)$ is one side of the 
triangle $\triangle abc$ enclosed in $S$. Assume without loss of generality that $a$ and $b$ lie on the south and north sides 
of $S$ respectively, and that $b \in Q_1(a)$. We further assume that $c \in Q_1(a)$; if this is not the case, we reverse the roles of $a$ and $b$ and rotate the vertex set $V$ by $\pi$ to make this assumption hold. 

The situation where $(a,a')$ lies counterclockwise from $(a, b)$ is similar to the one depicted in~\autoref{fig:y4main}(a) and the same analysis applies here as well. So assume that $(a,a')$ lies clockwise from $(a, b)$, as depicted in~\autoref{fig:y4main}(c). 
%If $(a,a')$ lies counterclockwise from $(a, b)$, the analysis used for the case depicted in~\autoref{fig:y4main}(a) applies here as well, so the theorem holds. So assume that $(a,a')$ lies clockwise from $(a, b)$, as depicted in~\autoref{fig:y4main}(c). 
Let $\arr{(b, b')} \in Y_4$, with $b' \in Q(b, a')$. 
By definition, 
\begin{equation}
d_2(b, b') \le d_2(b, a') < \sqrt{2} \cdot d_2(a, b) \mbox{~~(cf. inequality~(\ref{eq:a'b}))} 
\label{eq:bb'2}
\end{equation}
Also by the definition of $Y_4$, $d_2(a, a') \le d_2(a, c)$ and $d_2(b, b') \le d_2(b, c)$, therefore 
$a'$ is no higher and $b'$ is no lower than $c$. 
We first discuss the more complex situation where $(b, b')$ does not intersect $(a, a')$. We seek to identify two intersecting paths in 
$Y_4$, one that begins at $a'$ and extends toward $b'$, and one that begins with $(b, b')$ and then heads toward $a'$. 
By~\autoref{lem:y4zigzag}, $p_R(a', c)$ may not cross $(a, c)$, and similarly $p_R(b', c)$ may not cross $(b, c)$. This implies 
that $p_R(a', c)$ extends above and to the right of $c$, and $p_R(b', c)$ extends below and to the right of $c$.
% both paths intersect the horizontal through $c$. 

Let $e$ be the endpoint of $p_R(a', c)$ other than $a'$,  and let $d$ be the endpoint of $p_R(b', c)$ other than $b'$. 
If $p_R(a', c)$ and $p_R(b', c)$ intersect, we found our two intersecting paths. Otherwise, 
consider the more general case where $e$ lies left of $d$ (a similar analysis applies to the case where $e$ lies right of $d$). In this case, 
note that $p_R(e, b')$ is trapped underneath the path $p_{b} = (b, b') \oplus p_R(b',c)$, therefore $p_R(e, b')$ must intersect $p_b$. 
Thus we have identified two intersecting paths, 
$p_{a'} = p_R(a', c) \oplus p_R(e, b')$ and $p_{b} = (b, b') \oplus p_R(b',c)$. 
If $p_{a'}$ and $p_b$ share a vertex,
define $x = y$ to be the common vertex; otherwise, let $(x, y)$ be a shortest side of the quadrilateral 
formed by the endpoints of the two edges on $p_{a'}$ and $p_b$ that cross. 
Next we determine an upper bound on the length of these crossing edges, which together with~\autoref{lem:y4cross} will help determine an upper bound on the distance in $Y_4$ between $x$ and $y$. 

Let $p$ be the upper right corner of $S$. Let $q$ be the intersection between the horizontal through $p$ and the circle with center $b$ and radius $(b, c)$. Refer to~\autoref{fig:y4main}c. Similarly, let $u$ be the lower right corner of $S$ and let $v$ be the intersection between the horizontal through $u$ and the circle with center $a$ and radius $(a, c)$. 
First observe that $d_2(p, q) < d_2(p, c)$  (this follows immediately from the fact that $\angle{pcq} < \angle{bcq} = \angle{bqc}$), and similarly $d_2(u, v) < d_2(u, c)$. This implies  
\begin{eqnarray}
\nonumber d_2(c, q)  <  \sqrt{2} \cdot d_2(p, c) < \sqrt{2} \cdot d_2(a, b) \\
d_2(v, c)  < \sqrt{2} \cdot d_2(u, c) < \sqrt{2} \cdot d_2(a, b) 
\label{eq:cq} 
\end{eqnarray}
We use these inequalities, along with~\autoref{lem:R}, to establish the following upper bounds: each edge on $p_R(a', c)$ is no longer than $d_2(a', c) \le d_2(v,c) < \sqrt{2} \cdot d_2(a, b)$,  conform inequality~(\ref{eq:cq}); each edge on $p_R(e, b')$ is no longer than $d_2(e, b') \le d_2(c, b') < d_2(c, q) <  \sqrt{2} \cdot d_2(a, b)$, and similarly for each edge on $p_R(b', c)$. Inequality~(\ref{eq:bb'2}) shows that the same upper bound of $\sqrt{2} \cdot d_2(a, b)$ holds for $d_2(b, b')$ as well. We conclude that each of the two crossing edges on $p_{a'}$ and $p_b$ is no longer 
than $\sqrt{2} \cdot d_2(a, b)$. This along with~\autoref{lem:y4cross} implies that 
\begin{eqnarray}
 d_{Y_4} (x, y)  & \le 3(2+\sqrt{2})\sqrt{2}\cdot d_2(a, b) = 6(1+\sqrt{2})\cdot d_2(a, b)
\label{eq:xy2}
\end{eqnarray}
Also by~\autoref{lem:R} we have that $d_R(a',c) \le \sqrt{2} \cdot d_2(a', c)$, 
 $d_R(b', c) \le \sqrt{2} \cdot d_2(b', c)$, and 
$d_R(e,b') \le \sqrt{2} \cdot d_2(e, b') \le \sqrt{2} \cdot d_2(b', c)$. Summing up these inequalities yields 
 \begin{eqnarray*}
 d_R(a',c) + d_R(e, b') + d_R(b', c) & \le & \sqrt{2} \cdot (d_2(a', c) + d_2(b', c) + d_2(b', c)) \\
 & \le & \sqrt{2} \cdot (d_2(v, c) + d_2(q, c) + d_2(q, c)) 
  \end{eqnarray*}
Substituting the inequalities from~(\ref{eq:cq}) in the inequality above yields 
 \begin{eqnarray*}
 d_R(a',c) + d_R(e, b') + d_R(b', c) & \le & \sqrt{2} \cdot (\sqrt{2} \cdot d_2(u, c) + \sqrt{2} \cdot d_2(p, c) + \sqrt{2} \cdot d_2(p, c)) \\
 & = & 2 \cdot (d_2(u, p) + d_2(p, c)) < 4 \cdot d_2(a, b)
  \end{eqnarray*}
This latter inequality follows from the fact that each of $d_2(u, p)$ and $d_2(p, c)$ is bounded above by $d_2(a,b)$.
This together with inequalities~(\ref{eq:aa'}),~(\ref{eq:bb'2}) and~(\ref{eq:xy2}) yields 
 \begin{eqnarray*}
 d_{Y_4} (a,b)  & \le & d_2(a,a') + d_2(b, b') + (d_R(a',c) + d_R(e, b') + d_R(b', c)) + d_{Y_4}(x, y)  \\
 & \le & d_2(a, b) + \sqrt{2} \cdot d_2(a, b) + 4 \cdot d_2(a, b) + 6(1+\sqrt{2})\cdot d_2(a, b) \\
 & = & (11 + 7\sqrt{2}) \cdot d_2(a, b)
 \end{eqnarray*}
Thus the theorem holds for this case. The case where $(b, b')$ intersects $(a, a')$ is a special instance of the one discussed above, with the paths $p_{a'}$ and $p_R(b', c)$ reduced to null.  This concludes the proof. 
\end{proof}

\section{Conclusion}
In this paper we improve the upper bounds on the stretch factors of $Y_4^\infty$ and $Y_4$. The best known lower bound on the stretch factor of $Y_4^\infty$ is the one established in~\cite{BG+12} for $Del^\infty$, which is 
$\sqrt{4+2\sqrt{2}} \lesssim 2.62$. Narrowing the gap between this lower bound and the upper bound of $4.94$ established in this paper remains open. 

The second result of this paper reduces the upper bound on the stretch factor of $Y_4$ from $662.16$ to $54.62$. This bound might be improved with a more careful analysis that does not rely on the intermediate results from~\cite{BDD+12} employed by~\autoref{lem:y4cross}. We believe that the real stretch factor is much lower, and leave open reducing the 
upper bound further. 
%Regarding the lower bound, we have failed to construct an example of a $Y_4$ graph with stretch factor greater than $7$.  

%\small
%\bibliographystyle{plain}
%\bibliography{y4references}

\begin{thebibliography}{1}

\bibitem{BBD+15}
Luis Barba, Prosenjit Bose, Mirela Damian, Rolf Fagerberg, Wah~Loon Keng,
  Joseph O'Rourke, Andr\'{e} van Renssen, Perouz Taslakian, Sander Verdonschot,
  and Ge~Xia.
\newblock New and improved spanning ratios for yao graphs.
\newblock {\em Journal of Computational Geometry}, 6(2):19--53, 2015.

\bibitem{BG+12}
Nicolas Bonichon, Cyril Gavoille, Nicolas Hanusse, and Ljubomir Perkovi\'{c}.
\newblock The stretch factor of {L}$_1$- and {L}$_\infty$-{D}elaunay
  triangulations.
\newblock In {\em Proceedings of the 20th Annual European Conference on
  Algorithms}, ESA'12, pages 205--216, Berlin, Heidelberg, 2012.
  Springer-Verlag.

\bibitem{BK+14}
Nicolas Bonichon, Iyad Kanj, Ljubomir Perkovi\'{c}, and Ge~Xia.
\newblock There are plane spanners of maximum degree 4.
\newblock In {\em Proceedings of the Thirtieth Annual Symposium on
  Computational Geometry}, SOCG'14, pages 20:20--20:29, New York, NY, USA,
  2014. ACM.

\bibitem{BDD+12arXiv}
Prosenjit Bose, Mirela Damian, Karim Dou\"{\i}eb, Joseph O'Rourke, Ben Seamone,
  Michiel Smid, and Stefanie Wuhrer.
\newblock $\pi$/2-angle {Y}ao graphs are spanners.
\newblock {\em CoRR}, abs/1001.2913, 2010.

\bibitem{BDD+12}
Prosenjit Bose, Mirela Damian, Karim Dou\"{\i}eb, Joseph O'Rourke, Ben Seamone,
  Michiel Smid, and Stefanie Wuhrer.
\newblock $\pi$/2-angle {Y}ao graphs are spanners.
\newblock {\em International Journal of Computational Geometry and
  Applications}, 22(1):61--82, 2012.

\bibitem{Nawar09}
Nawar~M. El~Molla.
\newblock Yao spanners for wireless ad hoc networks.
\newblock Master's thesis, Villanova University, 2009.

\bibitem{LW80}
Der-Tsai Lee and C.~K. Wong.
\newblock Vorono\"{\i} diagrams in ${L}_1 ({L}_\infty)$ metrics with
  2-dimensional storage applications.
\newblock {\em SIAM Journal of Computing}, 9(1):200--211, 1980.

\bibitem{Yao82}
Andrew~C. Yao.
\newblock On constructing minimum spanning trees in $k$-dimensional spaces and
  related problems.
\newblock {\em SIAM Journal on Computing}, 11(4):721--736, 1982.

\end{thebibliography}

\end{document}